\documentclass[11pt, reqno]{amsart}
\usepackage{physics}
\usepackage{hyperref}
\usepackage{amsmath}
\usepackage{amsthm}
\usepackage{amsfonts}
\usepackage{amssymb}
\usepackage{dsfont}
\usepackage{enumerate}
\usepackage[english=british]{csquotes}
\usepackage{geometry}
\newtheorem{proposition}{Proposition}
\newtheorem{theorem}{Theorem}
\newtheorem{lemma}{Lemma}
\newtheorem{corollary}{Corollary}

\newtheorem{definition}{Definition}

\renewcommand{\var}{\text{Var}}
\newcommand{\cov}{\text{Cov}}
\renewcommand{\vec}{\text{vec}}
\newcommand{\svec}{\text{svec}}

\begin{document}
\title{Gaussian Dynamical Quantum State Tomography}
\author[Rall]{Hjalmar Rall$^{1,2}$}
\email{hjalmar.rall@tum.de}
\address{$^1$ Department of Mathematics, Technical University of Munich}
\address{$^2$ Munich Center for Quantum
Science and Technology (MCQST),  M\"unchen, Germany}
\begin{abstract}
Standard quantum state tomography assumes sufficient control of a system to measure an informationally complete set of observables. Dynamical quantum state tomography (DQST) presents an alternative: given a system with known dynamics and a single fixed observable, it almost always suffices to control only the time at which each i.i.d. copy of the system is measured. This work presents an analogous scheme for tomography of multi-mode Bosonic Gaussian states undergoing Gaussian evolution, using a fixed single-mode homodyne measurement and only assuming control of the time of measurement. I prove that the scheme enables tomography for all discrete homogenous Gaussian evolutions and Gaussian quantum dynamical semigroups except for a null set which includes unitary evolution. When the state is known to be pure, a smaller number of measurement times is shown to be sufficient.
\end{abstract}

\maketitle
\section{Introduction}
A fundamental quantum information task is that of determining the state of a quantum system when little to none of it's properties are known. This is usually achieved through quantum state tomography, requiring an informationally complete set of measurements performed on identical independent (i.i.d.) copies of the state. The Heisenberg picture reveals that such a set of observables is equivalent to copies of a single fixed observable that have undergone different (dual) evolutions. This raises the natural question of whether it is sufficient to consider a single repeated evolution or a one-parameter family of evolutions, and in particular when it is possible to perform tomography of a state undergoing a known homogeneous evolution by performing a fixed measurement on i.i.d copies whilst varying only the time of measurement. This scheme is known as Dynamical Quantum State Tomography (DQST) and has been studied in \cite{peruzzo2025reconstructing, merkel2010random}. One important motivation for DQST is the practical challenges of performing standard quantum state tomography of multipartite systems where the exponential number of measurement settings and the need for multipartite measurements presents a challenge.\\

The problem was studied in \cite{kech2016dynamical}, where it was shown that DQST is generically possible for a \(d\)-dimensional state measured with a \(d\)-outcome POVM after the state undergoes discrete-time unitary evolution or with a binary POVM if general discrete-time evolution is allowed. In \cite{merkel2010random} the authors present a quantum state reconstruction technique for a \(d\)-dimensional Hilbert space wherein an observable in \(\mathfrak{su}(d)\) is evolved under repeated application of a fixed random unitary. Despite not being informationally complete, their scheme approximately determines the initial state with high fidelity. In \cite{xiao2024quantum} DQST is formulated in the language of classical control theory for closed systems undergoing known discrete-time unitary dynamics.\\

A key feature of DQST is that global dynamics in combination with local measurements can in some cases enable tomography of the global state, and as such it is related to the quantum marginal problem. This is studied in \cite{peruzzo2024reconstructing} in the context of finite-dimensional systems under discretized Markovian dynamics. They show that randomly choosing parameters of the evolution enables unique determination of the state with probability 1, and formulate an optimization problem to estimate the state in the case of non-ideal measurements.\\

In \cite{rall2025quantum}, dynamical quantum tomography with binary measurement outcomes is studied for both states and observables on \(d\)-dimensional systems under general evolution. Such tomography is shown to be generically possible, though unitary evolution is insufficient depending on the dimension and whether a state or observable is to be determined. General results for infinite time series are derived and used to adapt the main result to continuous-time Markovian evolution with binary measurements at arbitrary times.  \\

In this work I study DQST for Bosonic systems under the assumption that both the state and evolution are Gaussian. In this case, the states have a finite-dimensional representation and may be uniquely identified with a finite set of informationally complete measurements. The primary motivation is to extend some of the results of \cite{rall2025quantum} to continuous variable systems. Nevertheless, it is conceivable that there are practical scenarios favouring the scheme presented here over standard homodyne \cite{vrehavcek2009effective, d1999universal, d2007homodyne, esposito2014pulsed} or heterodyne \cite {vrehavcek2015surmounting, bittel2025energy, chapman2022bayesian} state tomography. In photonic systems for instance, Gaussian DQST (GDQST) could be realised in a manner similar to a photonic quantum walk \cite{graefe2016integrated} using just a fixed photonic circuit, an optical delay line, and a switch in order to implement a Gaussian channel multiple times. This is potentially simpler than homodyne or heterodyne tomography which must make a tradeoff between the number of circuit configurations and the number of detectors and cannot determine a multi-mode state using only one of each. Whilst the number of measurement settings required to determine a Gaussian state scales only as the square of the number of modes of the system, this reduction in complexity could conceivably prove advantageous. \\

The article is laid out as follows: 
In section 2, I introduce notation and the machinery for working with Bosonic Gaussian states represented in terms of their first and second moments, and discuss homodyne measurement in the context of quantum state tomography. In section 3, I present a version of the result of \cite{rall2025quantum} which states that the power sequence of a finite-dimensional matrix is almost always fully determined by a finite subsequence, and show that this applies to DQST of Gaussian states undergoing general Gaussian discrete-time evolution. In section 4, this is extended to DQST of Gaussian states under continuous-time Markovian dynamics. The main results are given in section 5 where the requirements are given for DQST of any Gaussian state to be possible under almost any Gaussian evolution, as well as a No-Go result for unitary evolution. Lastly, in section 5.1 it is shown that fewer measurements are required given prior information that the state is pure.

\section{Notation and preliminaries}
Henceforth, I shall denote by \(\mathbb{M}_n(\mathds{R})\) the set of \(n \times n\) real matrices, by \(\mathbb{S}_n(\mathds{R})\) the set of \(n \times n\) real symmetric matrices, and by \(\lor^2 \mathds{R}^n\) the symmetric subspace of \(\mathds{R}^n \otimes \mathds{R}^n\). Furthermore, \(\mathcal{I}_n\) shall denote the index set consisting of unordered pairs \(\{i, j\}\) with \(i,j \in \mathds{Z}^n\).

\begin{definition}
  Let \(\ket{\psi_{\{i,j\}}} := \frac{1}{\sqrt{2(1+\delta_{ij})}} \bigg(\ket{i}\ket{j} + \ket{j}\ket{i}\bigg)\) for \(\{i,j\} \in \mathcal{I}_n\). These define an orthonormal basis of \(\lor^2 \mathds{R}^n\). \end{definition}

\begin{definition}
  For any symmetric matrix \(S \in \mathbb{S}_{n}(\mathds{R})\), define the symmetric vectorization \cite{schaecke2004kronecker} \(\) as
  \begin{equation*}
    \begin{split}
 \mathds{R}^{\frac{n(n+1)}{2}} \ni     \svec(S) &:= (s_{11}, \sqrt{2}s_{21}, \hdots, \sqrt{2} s_{n1}, s_{22}, \sqrt{2}s_{32}, \hdots, \sqrt{2}s_{n2}, \hdots, s_{nn})\\
      &= \sum_{\{i,j\} \in \mathcal{I}_n} (\sqrt{2}(1-\delta_{ij}) + \delta_{ij}) \bra{j}S\ket{i}\ket{\psi_{\{i,j\}}}
    \end{split}
  \end{equation*}
  where, following convention, the additional factors of \(\sqrt{2}\) are added to ensure that \(\Tr(ST) = \svec(S)^T\svec(T) \) for \(\forall S,T\in \mathbb{S}_n(\mathds{R})\).
\end{definition}

\begin{definition}
  Let \(S \in \mathbb{S}_{2m}(\mathds{R})\). The \textit{symmetric Kronecker product} \cite{schaecke2004kronecker}, is defined for any \(A, B \in \mathbb{M}_{2m }(\mathds{R})\) and \(S \in \mathbb{S}_{2m}(\mathds{R})\) by
  \begin{equation*}
(A \otimes_s B)\svec(S) = \frac{1}{2}\svec(ASB^T + BSA^T).
  \end{equation*}
  Equivalently, let \(Q\) be the unique \(m(2m+1) \times (2m)^2 \) matrix with \(Q \vec(S) = \svec(S) \) and \( Q^T \svec(S)  = \vec(S) \). Then the symmetric Kronecker product is defined by
  \begin{equation*}
(A \otimes_s B) = \frac{1}{2}Q(A \otimes B + B \otimes A)Q^T.
  \end{equation*}
\end{definition}
It will be useful to note that the symmetric kronecker product has the following property which follows directly from the definition:
\begin{equation}\label{symmkronpowers}
(A^i \otimes_s A^i) = (A \otimes_s A)^i.
\end{equation}
 
\subsection{Gaussian dynamics}
Consider an \(m\)-mode bosonic system with phase space having coordinates \((q_1, \hdots, q_m, p_1, \hdots, p_m) \in \mathds{R}^{2m},\) and symplectic form 
\(
\Omega := \begin{pmatrix}
  0 & I_m\\
  -I_m & 0\\
\end{pmatrix}.
\)
Let \(Q_i, P_i\) be the canonical quadrature operators on the \(i\)-th mode and collect them in a vector \(\mathbf{R} := (R_1, \hdots, R_{2m}) = (Q_1,\hdots, Q_m, P_1, \hdots, P_m)\), then a state \(\rho\) has first and second moments given by the displacement vector \(\mathbf{d} \in \mathds{R}^{2m}\) with \(d_k = \Tr \rho R_k \) and covariance matrix \(\Gamma \in \mathbb{M}_{2m}(\mathds{R})\) with \(\Gamma_{kl} = \Tr [\rho(\{R_k - d_k \mathds{1}, R_l - d_l \mathds{1}\}_+)]\) respectively. A state is said to be Gaussian if it has a Gaussian Wigner distribution
\begin{equation}
\mathcal{W}(\xi) = c e^{-(\xi - \mathbf{d})\Gamma^{-1}(\xi - \mathbf{d})}
\end{equation}
and thus every Gaussian state is uniquely determined by the pair \((\Gamma, \mathbf{d})\).\\

A Gaussian channel is a quantum channel mapping Gaussian states to Gaussian states. In terms of the above parameterization of Gaussian states, there is a one-to-one correspondence between Gaussian channels and mappings
\begin{equation}
 (\Gamma, \mathbf{d}) \mapsto (X \Gamma X^T + Y, X^T \mathbf{d})
\end{equation}
with \(X\in \mathbb{M}_{2m }(\mathds{R}), Y \in \mathbb{S}_{2m}(\mathds{R})\) satisfying the symplectic condition
\begin{equation}\label{symplecticcond}
Y + i\Omega - iX \Omega X^T \geq 0.
\end{equation}
Denote by \(\mathcal{T}_G\) the set of all pairs \((X, Y)\) parameterizing a Gaussian channel.\\
Throughout this work, \(\Gamma\) will be used interchangeably with its symmetric vectorization \(\ket{\gamma} := \svec(\Gamma) \) for which \(\svec(X \Gamma X^T) = X \otimes_s X \ket{\gamma } \).

\begin{proposition}\label{measure}
        Define an isomorphism \(\nu : \mathbb{M}_{2m}(\mathds{R}) \times \mathbb{S}_{2m}(\mathds{R}) \rightarrow \mathds{R}^{m(6m+1)}\) such that \(\nu (X, Y) = \vec(X) \otimes \svec(Y)\). The pushforward \((\nu^{-1})_{*}\lambda_{m(6m+1)}\) of the Lebesgue measure on \(\mathds{R}^{m(6m+1)}\) defines a measure on \(\mathbb{M}_{2m}(\mathds{R}) \times \mathbb{S}_{2m}(\mathds{R})\) and hence (by restriction) a measure on \(\mathcal{T}_G\). The restricted measure is non-trivial since the Euclidean dimension \(\text{dim}(\nu(\mathcal{T}_G)) = m(6m+1)\).
\end{proposition}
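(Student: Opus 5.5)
Since \(\mathcal{T}_G\) is defined by the condition \eqref{symplecticcond}, and \(m(6m+1)\) is exactly \(\dim \mathbb{M}_{2m}(\mathds{R}) + \dim \mathbb{S}_{2m}(\mathds{R}) = 4m^2 + m(2m+1)\), the claim reduces to showing that \(\nu(\mathcal{T}_G)\) has nonempty interior in \(\mathds{R}^{m(6m+1)}\). Then its Lebesgue measure is positive. The map \(\nu\) should be read as the concatenation \((\vec(X),\svec(Y))\), i.e.\ a linear isomorphism, rather than a tensor product, which would not be bijective. With this reading, the pushforward measure is just Lebesgue measure transported by a linear bijection, so open sets go to open sets and the measure of any set with interior is strictly positive.

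To find an interior point, I take \(X = 0\) and \(Y = c I_{2m}\) with \(c > 1\). The matrix in \eqref{symplecticcond} is then \(cI + i\Omega\), which is Hermitian because \(i\Omega\) is Hermitian (\(\Omega\) is real antisymmetric). Its eigenvalues are \(c \pm 1\), since \(\Omega^2 = -I\) gives \(i\Omega\) eigenvalues \(\pm 1\). Hence the matrix is strictly positive definite.

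The map \((X,Y) \mapsto Y + i\Omega - iX\Omega X^T\) is continuous into the Hermitian matrices, and for any real \(X\) the matrix \(X\Omega X^T\) is antisymmetric, so its \(i\)-multiple is Hermitian. Strict positivity is an open condition, since the minimal eigenvalue is continuous. Therefore a neighbourhood of \((0, cI)\) in \(\mathbb{M}_{2m}(\mathds{R}) \times \mathbb{S}_{2m}(\mathds{R})\) lies in \(\mathcal{T}_G\). Its image under \(\nu\) is an open ball in \(\mathds{R}^{m(6m+1)}\), giving full Euclidean dimension and a nontrivial restricted measure.

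No step is a real obstacle. The only care needed is interpreting \(\nu\) as a linear bijection and noting Hermiticity so that the eigenvalue-continuity argument applies. One could equally perturb any channel with \(Y\) strictly dominating, e.g.\ \(X = I\) and \(Y = \epsilon I\), where the matrix is \(\epsilon I > 0\).
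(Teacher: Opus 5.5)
Your proof is correct and follows the paper's own justification, namely that \(\nu(\mathcal{T}_G)\) is full-dimensional in \(\mathds{R}^{m(6m+1)}\); the paper only asserts this, whereas you prove it with the strictly feasible point \((0,cI)\) and the openness of strict positive definiteness. You are also right that \(\nu\) must be read as the concatenation \(\vec(X)\oplus\svec(Y)\) rather than a tensor product, since with \(\otimes\) neither the dimension count \(m(6m+1)\) nor bijectivity would hold.
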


\begin{definition}[Null set with respect to the Gaussian channels]
A subset of \(\mathcal{T}_G\) is referred to as a null set if it has zero measure with respect to the measure defined in Proposition \ref{measure}.
\end{definition}

\begin{lemma}
  The subset of \(\mathcal{T}_G\) for which the matrices \(X \in \mathbb{M}_{2m }(\mathds{R})\) are either defective or have a non-trivial kernel is a null set in \(\mathcal{T}_G\).
\end{lemma}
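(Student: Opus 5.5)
The plan is to show that both bad sets are contained in the zero sets of nontrivial polynomials in the entries of \(X\). Zero sets of nonzero polynomials are Lebesgue-null. Since the measure on \(\mathcal{T}_G\) is the restriction of the pushforward of Lebesgue measure under \(\nu^{-1}\), a set of the form \(\{(X,Y): p(X)=0\}\) is null in \(\mathbb{M}_{2m}(\mathds{R})\times\mathbb{S}_{2m}(\mathds{R})\). It is a product of a null set in the \(X\)-coordinates with all of \(\mathbb{S}_{2m}(\mathds{R})\), so Fubini gives measure zero. Its intersection with \(\mathcal{T}_G\) is then null as well.

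For the kernel condition I take \(p_1(X)=\det X\). This is a polynomial in the \(4m^2\) entries of \(X\) and is not identically zero, since \(\det I_{2m}=1\). Hence \(\{\det X=0\}\) is Lebesgue-null; I will cite the standard fact, or prove it by induction on the number of variables using Fubini, that the zero set of a nonzero real polynomial has measure zero.

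For defectiveness I use the discriminant of the characteristic polynomial. Set \(p_2(X)=\operatorname{disc}(\chi_X)=\prod_{i<j}(\lambda_i-\lambda_j)^2\). This is a symmetric function of the eigenvalues, so it is a polynomial in the coefficients of \(\chi_X\) and hence in the entries of \(X\). If \(X\) is defective, it has a repeated eigenvalue (over \(\mathbb{C}\)), so \(p_2(X)=0\). Thus the defective set lies inside \(\{p_2=0\}\). To see \(p_2\not\equiv 0\), evaluate at \(\operatorname{diag}(1,2,\dots,2m)\), whose eigenvalues are distinct. The union of the two null sets is null.

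The main subtlety, which I will address explicitly, is that \(\mathcal{T}_G\) could in principle sit inside a null set, making the statement vacuous. Proposition \ref{measure} rules this out by asserting that \(\mathcal{T}_G\) has full dimension. Even so, the lemma only needs that the bad subset has measure zero, and this follows by monotonicity of measure. The only care required is the restriction and product structure, so I will phrase the sets as \(\nu^{-1}\) of \(\{p_k(X)=0\}\times\mathds{R}^{m(2m+1)}\) and apply Fubini.
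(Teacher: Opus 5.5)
Your proof is correct and follows the same route as the paper: both bad sets lie in the zero sets of the nonzero polynomials \(\det X\) and the discriminant of the characteristic polynomial of \(X\), so both are Lebesgue-null. Your explicit product/Fubini step and your use of the containment defective \(\Rightarrow\) repeated eigenvalue (instead of the paper's looser ``determined by'') make precise what the paper leaves implicit. By monotonicity, the paper's remark that \(X\) ranges over all of \(\mathbb{M}_{2m}(\mathds{R})\) is not needed for the null-set claim itself.
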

\begin{proof}
For a general Gaussian channel \(Y\) can always be chosen to satisfy the symplectic condition \eqref{symplecticcond}, so \(X\) can be an arbitrary real matrix, for which these are standard results. The second may be observed by noting that \(det(X) = 0\) is a polynomial in the entries and therefore defines a real algebraic variety. It must be satisfied either on the whole of \(\mathbb{M}_{2m}(\mathds{R})\), which is clearly not the case, or on a null set. Hence the set of real matrices with non-trivial kernel form a null set. Similarly the set of \(X\) with degenerate eigenvalues is determined by the zero set of the discriminant of the characteristic polynomial of \(X\) which is again a real algebraic variety and must be a null set since the set of non-degenerate matrices is not empty \cite{rall2025quantum}.
\end{proof}

\subsection{Homodyne measurement}
For homodyne tomography (e.g. \cite{sych2012informational}) one wishes to measure an informationally complete POVM made up of projectors onto the quadratures \(X(\theta) = cos(\theta) Q + sin(\theta) P\), corresponding to the eponymous homodyne measurements.
Gaussian states are described fully by their first and second moments and can therefore be determined by a finite number of homodyne measurements. For the case of a multimode Gaussian state it is desirable to measure only a single mode due to the practical challenges of multimode measurements; this may be achieved by acting on the multimode state with some Gaussian unitary prior to measurement. \v Reha\v cek et al. \cite{vrehavcek2009effective} describe a practical experimentally realisable scheme whereby all the first and second moments may be obtained using \(m\) beamsplitter configurations and a phase shifter on the measured mode scanning over \(\theta\). This paper addresses the question of whether these multiple configurations may be replaced by a single known Gaussian channel applied a varying number of times.\\

Suppose W.L.O.G. that one physically measures the single-mode operator \(\hat{R}_1\). Assuming a perfect detector this yields \(\langle \hat{R_1} \rangle \) and \(\text{Var}(\hat{R_1})\), corresponding to \(d_1\) and \(\Gamma_{11}\) respectively. In the interest of generality, one can furthermore consider the scenario where one has access not just to a fixed single-mode measurement, but also to a fixed passive Gaussian unitary. Applying a passive Gaussian unitary \(U\) prior to measurement, the effective measurement \(U^{\dag}\hat{R}_1U\) corresponds to \(\bra{1}O \ket{d}\) and \(\bra{11} O \otimes O \ket{\gamma}\) with \(O\) orthogonal symplectic. Hence \(\mathbf{b} \equiv \ket{b} = O^T \ket{1}\) can be any unit vector. In this way it is possible to measure linear combinations \(\sum_j b_{j} R_j = \mathbf{b}^T \mathbf{R}\) of quadrature variables yielding
\begin{equation}
  \langle \mathbf{b}^T \mathbf{R} \rangle = \mathbf{b}^T \mathbf{d} = \braket{b}{d}\\
\end{equation}
and
\begin{equation}\label{eq:varaTR}
  \var(\mathbf{b}^T \mathbf{R}) = \vec(\mathbf{b}\mathbf{b}^T)^T \vec(\Gamma) = \svec(\mathbf{b}\mathbf{b}^T)^T \svec(\Gamma).
\end{equation}
Eq. \eqref{eq:varaTR} follows directly from the identity  \(\var(\sum_{j=1}^{2m} b_{j} R_j) = \sum_{j, k = 1}^{2m}  b_{j} b_{k} \cov(R_j, R_k)\).\\
For notational convenience, define \(\ket{a} := \svec(\mathbf{b}\mathbf{b}^T)\) so that Eq. \eqref{eq:varaTR} becomes
\begin{equation}
\var(\mathbf{b}^TR) = \braket{a}{\gamma}
\end{equation}
\begin{proposition}
  Any set of \(m(2m+1)\) homodyne measurement settings apart from a null set is sufficient for tomographic reconstruction of an \(m\)-mode Gaussian state.
\end{proposition}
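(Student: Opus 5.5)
The plan is to reduce the statement to a linear-algebra fact about the vectors \(\ket{a_k}=\svec(\mathbf{b}_k\mathbf{b}_k^T)\) and then use the same real-algebraic-variety argument as in the preceding Lemma.

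\textbf{Reduction.} A measurement setting is a unit vector \(\mathbf{b}_k\in\mathds{R}^{2m}\). Each setting yields
\[
\braket{b_k}{d}
\qquad\text{and}\qquad
\braket{a_k}{\gamma}, \quad \ket{a_k}=\svec(\mathbf{b}_k\mathbf{b}_k^T).
\]
A Gaussian state is uniquely determined by \((\Gamma,\mathbf{d})\), so it suffices to recover both from \(N=m(2m+1)\) settings.

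\textbf{Recovering \(\Gamma\).} The vector \(\ket{\gamma}\) lies in \(\mathds{R}^{m(2m+1)}\). It is determined exactly when the \(N\times N\) matrix \(A\), whose \(k\)-th row is \(\bra{a_k}\), is invertible. The map \((\mathbf{b}_1,\dots,\mathbf{b}_N)\mapsto \det A\) is a polynomial in the entries of the \(\mathbf{b}_k\) (homogeneous of degree 2 in each \(\mathbf{b}_k\)). Hence its zero set is a real algebraic variety in \((\mathds{R}^{2m})^N\). Restricted to the product of unit spheres, it is either everything or a null set with respect to the surface measure.

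The step that actually needs work is showing the polynomial is not identically zero, i.e.\ exhibiting one good configuration. I would use the standard polarization choice:
\[
\mathbf{b}=e_i \text{ for } i=1,\dots,2m, \qquad \mathbf{b}=(e_i+e_j)/\sqrt{2} \text{ for } i<j.
\]
The outer products \(e_ie_i^T\) and \(\tfrac12(e_i+e_j)(e_i+e_j)^T\) span \(\mathbb{S}_{2m}(\mathds{R})\). Indeed, the off-diagonal element \(e_ie_j^T+e_je_i^T\) is recovered by subtracting the diagonal terms. This gives exactly \(2m+\binom{2m}{2}=m(2m+1)\) linearly independent vectors \(\ket{a_k}\). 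Since \(\svec\) is an isometric isomorphism, \(A\) is invertible for this configuration, so \(\det A\not\equiv 0\).

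\textbf{Recovering \(\mathbf{d}\).} The vector \(\mathbf{d}\) is found from the \(2m\) linear equations \(\braket{b_k}{d}\). This requires the \(\mathbf{b}_k\) to span \(\mathds{R}^{2m}\). Spanning is automatic once \(A\) is invertible: if all \(\mathbf{b}_k\) lay in a hyperplane \(v^\perp\), then every \(\bra{a_k}\) would annihilate \(\svec(vv^T)\), contradicting invertibility. So no second exceptional set is needed.

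\textbf{Null-set bookkeeping.} The bad set of configurations is the zero set of a nontrivial polynomial restricted to \((S^{2m-1})^N\). It is null in \((\mathds{R}^{2m})^N\), and by homogeneity its restriction to the product of spheres is null for the surface measure. To check the restriction rigorously, I would either pull back via polar coordinates or note that the variety intersected with a product of spheres is a proper subvariety of an irreducible real analytic manifold. This is the only mildly technical step.
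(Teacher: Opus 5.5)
Your proposal is correct and follows essentially the paper's route: you view failure as the zero set of a polynomial in the entries of the \(\mathbf{b}_k\), and you show it is not identically zero using the polarization configuration built from \(e_j\) and \(e_j+e_k\), whose outer products span \(\mathbb{S}_{2m}(\mathds{R})\). Two of your additions tidy up points the paper treats loosely. First, invertibility of the variance system already forces the \(\mathbf{b}_k\) to span \(\mathds{R}^{2m}\), via the kernel vector \(\svec(vv^T)\); this replaces the paper's separate kernel condition for \(\mathbf{d}\), which as written always holds, since a nonzero \(\mathbf{u}\in\mathds{R}^{m(2m+1)}\) with \(\sum_i u_i b_{ij}=0\) always exists. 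Second, you handle the null set on the product of unit spheres, which the paper does not address.
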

\begin{proof}
  In general, a homodyne measurement yields statistics for a linear combination \(\sum_j b_{j} R_j = \mathbf{b}^T \mathbf{R}\) of quadrature variables \(R_j\). A set of measurements defined by \(\{\mathbf{b}_1, ..., \mathbf{b}_{m(2m+1)}\}\) yield a system of equations
  \(\var(\mathbf{b}_i^T \mathbf{R}) =  \svec(\mathbf{b}_i\mathbf{b}_i^T)^T \svec(\Gamma)\), which has a unique solution if and only if there is a nonzero vector \(\mathbf{v}\) such that \(\sum_i v_i b_{ij}b_{ik} = 0, \: \forall \{j,k\} \in \mathcal{I}_{2m}\). Simultaneously, these measurements yield a system of equations \(\langle \mathbf{b}_i^T \mathbf{R} \rangle = \mathbf{b}_i^T \mathbf{d}\) which has a unique solution if and only if there exists a non-zero vector \(\mathbf{u}\) such that \(\sum_i u_i b_{ij} = 0, \: \forall j = \{1, \hdots, 2m\}\). These define algebraic varieties, which are either zero on a null set or zero everywhere. To exclude the latter, choose \(\mathbf{b}_i\) to be the elements of \(\{e_j + e_k: \{j,k\} \in \mathcal{I}_{2m}\} \) which spans \(\mathds{R}^{2m}\). The \(\mathbf{b}_i\mathbf{b}_i^T\) make up the set \(\{E_{jj} + E_{kk} + E_{jk} + E_{kj}: \{j,k\} \in \mathcal{I}_{2m}\}\) so that \(\svec(\mathbf{b}_i\mathbf{b}_i^T)\) span \(\mathds{R}^{m(2m+1)}\). Thus there exists at least one measurement setting where both \(u\) and \(v\) are non-zero.
\end{proof}

It is reasonable, therefore, to assume that \(m(2m+1)\) fixed measurement settings may be replaced by a single measurement setting preceeded by a homogeneous Gaussian evolution, the duration of which is varied. 

\section{Time Series }
Suppose one is given independent copies of an unknown \(m\)-mode Gaussian state \((\Gamma,d)\), and may apply a known discrete-time gaussian evolution (Gaussian channel) \((X,Y)\) arbitrarily many times before performing a fixed homodyne measurement \(\mathbf{b}^T\mathbf{R}\). Repeated application of the channel \((X,Y)\) results in a sequence \((\Gamma_i, d_i)_{i=0}^n\) with \(\Gamma_i = (X^T)^i \Gamma X^i + \sum_{j=0}^{i-1} (X^T)^j Y X^j\) and \(\mathbf{d}_i = (X^T)^i \mathbf{d}\).
Now consider the sequence \(\Gamma_i'  = (X^T)^i \Gamma X^i\). If \(X\) and \(Y\) are both known, then  it is clear that \((\Gamma_i, \mathbf{d}_i)_{i=0}^n\) can always be determined from \((\Gamma_i' , \mathbf{d}_i)_{i=0}^n\) and vice versa. Henceforth, I shall consider the time-series 
\begin{align}
  \label{eq:timeseries}
  (\alpha_i)_{i=0}^n  = \left(\bra{a} X^i \otimes_s X^{i} \ket{\gamma} \right)_{i=0}^n  \qquad &\text{and} \qquad (\beta_i)_{i=0}^n = \left(\bra{b} X^i \ket{d} \right)_{i=0}^n\\
  \intertext{which may be obtained from the actual measurement statistics }
 \left(\bra{a}  \ket{\gamma_i} \right)_{i=0}^n \qquad &\text{and} \qquad \left(\bra{b} \ket{d_i} \right)_{i=0}^n.
\end{align}
Note that this is not simply equivalent to considering the channel \((X,0)\), as the latter is constrained by the complete positivity condition to have symplectic \(X\). \\

The rest of this section captures the idea that the orbit of a state under homogenous time evolution is uniquely determined by the state at a finite number of points in time. This is first demonstrated for discrete evolution with reference to the minimal eliminating polynomial of \(X\). 

\subsection{Time series extensions}
Any endomorphism \(T\) on a finite \(d\)-dimensional vector space has a minimal polynomial of degree \(\delta (T) \leq d\) i.e. a polynomial \(p\) of minimal degree satisfying \(p(T) = 0\). Denoting by \(j_{\lambda}\) the size of the largest Jordan block corresponding to the eigenvalue \(\lambda\) and setting \(j_0 := 0\) if \(T\) has trivial kernel, the minimal polynomial takes the form \(p(x) = \prod_{\lambda} (x - \lambda )^{j_{\lambda}}\). As noted in \cite{rall2025quantum}, this has the consequence that given a finite sequence \((T^i)_{i=t_0}^{t}\) of integer powers of \(T\) with \(t = t_0 + \delta - \text{min}\{t_0, j_0\}\) one can express \(T^{t+1}\) as a linear combination of the terms of this sequence and by induction the sequence may be extended to all future times.
\begin{equation}
  \label{extendforward}
T^{\delta+k} = \sum_{i=0}^{\delta - j_0 - 1}b_i T^{j_0 + i + k}, \text{for all } k \in \mathds{N}_0\\
\end{equation}
Similarly, if \(t_0 > j_0\) the sequence can be extended backward until \(j_0\) according to
\begin{equation}
  \label{extendbackward}
T^{t_0 - k} = \sum_{i=0}^{\delta - j_0 - 1}c_iT^{t_0 + i - k + 1}, \text{for all } k \in \{1, ..., t_0 - j_0\},\\
\end{equation}
where the constants \(b_i, c_i\) are determined by the minimal polynomial. By induction, one can obtain \((T^i)_{i=\text{min}\{t_0, j_0\}}^{\infty}\), where each term in the forward (backward) extension is a linear combination of preceeding (following) terms. This can be applied to the time series in eq. \eqref{eq:timeseries}.
 \begin{theorem}\label{thm:extension}
   For any homodyne measurement \(\mathbf{b}^T\mathbf{R}\) and almost any Gaussian channel\((X,0)\) and state \((\Gamma,d)\) the following hold:
   \begin{enumerate}[(i)]
     \item The time-series 
           \begin{equation}
       (\alpha_i)_{i=t_0}^{t_0 + m(2m+1)} := \left(\bra{a} (X^i \otimes_s X^i) \ket{\gamma} \right)_{i = t_0}^{t_0 + m(2m+1)}\\
     \end{equation}
           fully determines \(\alpha_i\) for all \(i \geq \min\{j_0(X \otimes_s X), t_0\}\).
     \item The time-series
           \begin{equation}
             (\beta_i)_{i=t_0}^{t_0 + 2m} := \left( \bra{b} X^i \ket{d} \right)_{i=t_0}^{t_0 + 2m}.
           \end{equation}
           fully determines \(\beta_i\) for all \(i \geq \min\{j_0(X), t_{0}\}\).
   \end{enumerate}
\end{theorem}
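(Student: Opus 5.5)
The plan is to reduce both parts to the minimal-polynomial recursions \eqref{extendforward} and \eqref{extendbackward}, applied to the endomorphisms \(T = X\otimes_s X\) on \(\lor^2\mathds{R}^{2m}\cong\mathds{R}^{m(2m+1)}\) and \(T = X\) on \(\mathds{R}^{2m}\). Since \(\bra{a}\cdot\ket{\gamma}\) and \(\bra{b}\cdot\ket{d}\) are linear, any linear relation among the powers \(T^i\) passes directly to the scalar sequences \(\alpha_i\) and \(\beta_i\). By \eqref{symmkronpowers} we have \(\alpha_i = \bra{a}(X\otimes_s X)^i\ket{\gamma}\), so \(\alpha\) is exactly the scalar time series of the single endomorphism \(T = X\otimes_s X\), sandwiched between the fixed vectors \(\bra{a}\) and \(\ket{\gamma}\). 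Likewise \(\beta_i = \bra{b}T^i\ket{d}\) with \(T = X\).

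For part (ii), let \(d = 2m\) and \(\delta = \delta(X)\le 2m\). If we write the minimal polynomial as \(p(x) = x^{j_0}q(x)\) with \(q(0)\neq 0\), then \(X^{k}p(X)=0\) gives \(X^{\delta+k} = \sum_i b_i X^{j_0+i+k}\). Contracting with \(\bra{b}\) and \(\ket{d}\) yields a linear recursion expressing \(\beta_{n}\) through the \(\delta - j_0\) preceding terms, for every \(n\ge\delta\).

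1. The window \(t_0,\dots,t_0+2m\) contains \(2m+1 \ge \delta - \min\{t_0,j_0\}+1\) consecutive terms. This is enough to start the forward induction from index \(t_0+\delta-\min\{t_0,j_0\}\), which is the \(t\) of the preceding discussion.
2. For the backward direction, I use that \(q(0)\neq0\). Multiplying \(p(X)=0\) by \(X^{k}\) for \(k\ge 0\) expresses \(X^{j_0+k}\) as a combination of \(X^{j_0+k+1},\dots,X^{\delta+k}\). This determines \(\beta_{n}\) from later terms for all \(n\ge j_0\), which reproduces \eqref{extendbackward} and reaches down to \(\min\{t_0,j_0\}\).

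Part (i) is identical with \(d = m(2m+1)\), \(\delta(X\otimes_s X)\le m(2m+1)\), and \(j_0(X\otimes_s X)\) in place of \(j_0(X)\). The window length \(m(2m+1)+1\) again suffices. The coefficients \(b_i, c_i\) come from the minimal polynomial of the known \(X\), so they are known, and the extension is a deterministic function of the observed window.

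The only remaining issue is the role of ``almost any''. Taken at face value the argument above works for every \(X\), since \(\delta\le d\) always holds. The genericity clause, together with the preceding Lemma that the set of defective or singular \(X\) is null, serves to identify the reach of the extension. For generic \(X\), both \(X\) and \(X\otimes_s X\) have trivial kernel: the eigenvalues of \(X\otimes_s X\) are the products \(\lambda_i\lambda_j\) with \(i\le j\), which are nonzero when \(\det X\neq 0\). Hence \(j_0=0\), and the series is then determined for all \(i\ge 0\). The subtle step, which I expect to be the main obstacle, is making the backward step rigorous when \(t_0 > j_0\) without invertibility of \(T\). I would handle it by restricting to the invariant subspace \(\ker T^{j_0}\)-complement (the range of \(T^{j_0}\)), on which \(T\) is invertible. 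The backward recursion holds only for indices at least \(j_0\), and this is exactly why the statement stops at \(\min\{j_0,t_0\}\).
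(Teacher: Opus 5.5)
Your proposal is correct and follows the paper's proof: apply the minimal-polynomial recursions \eqref{extendforward}--\eqref{extendbackward} to \(T = X\otimes_s X\) (resp.\ \(T = X\)), pass to \(\alpha_i,\beta_i\) by linearity of the contraction with the fixed vectors, and bound \(\delta\) by the dimension via Cayley--Hamilton. Your factorisation \(p(x)=x^{j_0}q(x)\) with \(q(0)\neq 0\) already makes the backward step rigorous, so the closing detour through the range of \(T^{j_0}\) is unnecessary; you are also right that, as in the paper's argument, nothing in the extension itself uses genericity.
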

\begin{proof}
  \phantom{.}
  \begin{enumerate}[(i)]
          \item It is straightforward to subsitute \(X \otimes_s X\) for \(T\) above to see that the sequence \(((X \otimes_s X)^i)_{i=t_0}^{t_0 + t}\) with \(t = \delta(X \otimes_s X) - \min\{t_0, j_0(X \otimes_s X)\}\) determines \((X \otimes_s X)^i\) for all \(i \geq \min\{j_0(X \otimes_s X), t_0\}\). The map \((X \otimes_s X)^i \mapsto \alpha_i\) is linear so it follows that each \(\alpha_i\) may be determined from \(((X \otimes_s X)^i)_{i=t_0}^{t_0 + t}\) by a sequence of linear maps. By Cayley-Hamilton, \(\delta(X \otimes_s X) \leq m(2m+1)\).
          \item The argument is identical.
  \end{enumerate}
\end{proof}

\section{One-parameter semigroups}
A family of \(m\)-mode Gaussian channels \((X_t, Y_t)_{t\geq 0}\) forms a quantum dynamical semigroup (QDS)--corresponding to evolution by a time-independent Markovian master equation--if and only if it can be expressed as \cite{heinosaari2009semigroup}
\begin{align}
X_t &= e^{t(A-H)\Omega},\label{eq:gen}\\
      Y_t &= 2 \int_0^t X_s^T \Omega^TB \Omega X_s ds,
\end{align}
with real matrices \(A, B, H \in \mathbb{M}_{2m}(\mathds{R})\) satisfying \(A^T = -A\), \(H^T = H\), \(iA + B \geq 0\). Thus each family of \(X_t\) is generated by a real matrix \(C := (A-H)\Omega \in \mathbb{M}_{2m}(\mathds{R})\) and since every real matrix can be uniquely decomposed in this way any Gaussian QDS can be parameterized by a pair \((C,B) \in \mathbb{M}_{2m}(\mathds{R}) \times \mathbb{S}_{2m}(\mathds{R})\) satisfying
\(B - \frac{i(C-C^{T})\Omega}{2} \geq 0\). In an abuse of nomenclature, the pair \((C,B)\) will be said to generate the corresponding QDS. Denote by \(\mathcal{G}_G\) the set of these. The following proposition defines a measure so that one can refer to null sets within \(\mathcal{G}_G\). 

\begin{proposition}\label{qdsmeas}
  Define isomorphisms \(\nu : \mathbb{M}_{2m}(\mathds{R}) \mapsto \mathds{R}^{4m^2}\) and \(\mu : \mathbb{S}_{2m}(\mathds{R}) \mapsto \mathds{R}^{m(2m + 1)}\). The Lebesgue measure on \(\mathds{R}^{4m^2}\) and \(\mathds{R}^{m(2m+1)}\) induces a measure on \(\mathbb{M}_{2m}(\mathds{R})\) and \(\mathbb{S}_{2m}(\mathds{R})\) respectively. The restriction of the product measure \(\nu \times \mu\) to \(\mathcal{G}_G\) then defines a new measure which is easily seen to be non-trivial since the condition \(B - \frac{i(C-C^{T})\Omega}{2} \geq 0\) defines for every \(C \in \mathbb{M}_{2m}(\mathds{R})\) a convex cone of admissable \(B\)'s which has positive measure in \(\mathbb{S}_{2m}(R)\). \end{proposition}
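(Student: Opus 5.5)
The assertion in Proposition \ref{qdsmeas} reduces to one fact. For every \(C\), the fibre \(\{B : (C,B)\in\mathcal{G}_G\}\) has positive Lebesgue measure in \(\mathbb{S}_{2m}(\mathds{R})\). Once that holds, the restriction of the product measure to \(\mathcal{G}_G\) is non-trivial by Tonelli. I would proceed in three steps.

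\textbf{Step 1: the fibre is a convex cone.} Fix \(C\) and put \(K := \frac{(C-C^T)\Omega}{2}\). The admissibility condition reads \(B - iK \geq 0\), and it is meaningful only when \(iK\) is Hermitian, i.e. \(K\) is real antisymmetric. I would check this directly. If \(C=(A-H)\Omega\), then \(C - C^T = (A-H)\Omega - \Omega^T(A^T - H) = (A-H)\Omega - \Omega(A+H)\), using \(\Omega^T=-\Omega\), \(A^T=-A\) and \(H^T=H\). Right-multiplying by \(\Omega\) and using \(\Omega^2=-I\), one recovers \(A\) up to a factor and a sign. The condition \(B-iK\ge 0\) is exactly the original condition \(iA+B\geq 0\) rewritten in terms of \((C,B)\). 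The set \(\{B\in\mathbb{S}_{2m}(\mathds{R}) : B - iK \geq 0\}\) is then an intersection of half-spaces \(\{B : \langle v, (B-iK) v\rangle \geq 0\}\) over \(v\in\mathds{C}^{2m}\). It is therefore convex and closed. It is also invariant under adding positive semidefinite matrices, so it is a translated convex cone.

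\textbf{Step 2: the fibre has non-empty interior.} This is the key step. I would take \(B_0 := (\|K\|_{op}+1) I_{2m}\). Since \(iK\) is Hermitian with spectrum in \([-\|K\|,\|K\|]\), we get \(B_0 - iK \geq I\). For any symmetric \(E\) with \(\|E\|_{op}<1\), it follows that \(B_0+E-iK\geq 0\). Hence an open ball around \(B_0\) lies in the fibre. Pushing forward under \(\mu\), which is a linear isomorphism, this gives an open set in \(\mathds{R}^{m(2m+1)}\) of positive Lebesgue measure.

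\textbf{Step 3: conclude.} The set \(\mathcal{G}_G\) is closed, hence measurable, because it is cut out by the continuous semidefiniteness condition. By Tonelli, \((\nu\times\mu)(\mathcal{G}_G)=\int \mu(\text{fibre}_C)\, d\nu(C)\), and the integrand is positive everywhere. The set even contains an open set: the ball argument above works uniformly for \(C\) near a fixed \(C_0\) because \(\|K\|\) depends continuously on \(C\). Hence \(\mathcal{G}_G\) has positive (in fact infinite) measure.

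The main obstacle is bookkeeping rather than depth. One must confirm that the stated condition \(B - \frac{i(C-C^T)\Omega}{2}\geq 0\) matches \(iA+B\geq0\) under the decomposition \(C=(A-H)\Omega\), and that the relevant matrix is antisymmetric so that the operator inequality makes sense. I would settle that first with the \(\Omega^2=-I\) computation. If the factors or signs turn out differently, the argument is unchanged, since only Hermiticity of \(iK\) is used.
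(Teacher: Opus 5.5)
Your strategy matches the paper's one-line justification: show the fibre over every $C$ has positive measure, then integrate over $C$. Steps 2 and 3 are a correct and more explicit version of it. The gap is in Step 1, where you assert a computational outcome that is false. Continuing from your correct identity $C-C^T=(A-H)\Omega-\Omega(A+H)$ gives
\[
(C-C^T)\Omega=-(A+\Omega A\Omega)+(H-\Omega H\Omega).
\]
This is not a multiple of $A$, and in general it is not antisymmetric. Already for $m=1$, $A=a\Omega$ gives $A+\Omega A\Omega=0$, while $H-\Omega H\Omega=(\operatorname{tr}H)\,I$. Hence $K=\tfrac12(\operatorname{tr}H)\,I$: $A$ drops out entirely and $iK$ is anti-Hermitian.

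Your fallback (``only Hermiticity of $iK$ is used'') fails at exactly this point. Whenever $K$ has a non-zero symmetric part, $B-iK$ is not Hermitian for any real symmetric $B$. The fibre is then empty and the spectral bound in Step 2 means nothing. For example, $C=-\Omega$ (i.e.\ $A=0$, $H=I$) gives $K=I$ in every dimension. Taken literally, the set cut out by $B-\tfrac{i}{2}(C-C^T)\Omega\ge 0$ lies over the proper linear subspace of those $C$ for which $(C-C^T)\Omega$ is antisymmetric. It is therefore a null set, and the non-triviality claim cannot be proved for the condition as printed.

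What does hold is the statement for the actual QDS constraint $iA+B\ge 0$, and your argument proves it once Step 1 is corrected. Since $A-H=-C\Omega$, $A$ is the antisymmetric part of $-C\Omega$, namely $A=-\tfrac12(C\Omega+\Omega C^T)$. The fibre over $C$ is then $\{B: B-\tfrac{i}{2}(C\Omega+\Omega C^T)\ge 0\}$. Set $K:=\tfrac12(C\Omega+\Omega C^T)$, which is real antisymmetric and linear in $C$. With this $K$, your interior point $(\|K\|_{op}+1)I$, the closedness of $\mathcal{G}_G$, and the Tonelli step all go through verbatim. Equivalently, skip the rewriting and work with $iA+B\ge 0$ directly: all you need is that $A$ is real antisymmetric and depends continuously on $C$.
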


At time \(t\), an initial state \((\Gamma, \mathbf{d})\) has evolved to \((\Gamma_t , \mathbf{d}_t ) = (X_t \Gamma X_t^T + Y_{t},\; X_t \mathbf{d})\). If \(C\) and \(B\) are known, \(Y_t\) can be calculated for every \(t\) and so as in the case of discrete evolution, it suffices to ignore \(Y\) and consider only \((\Gamma_t'  , \mathbf{d}_t' ) = (X_t \Gamma X_t^T ,\; X_t \mathbf{d})\).  \\

\begin{lemma}\label{clemma}
The set of Gaussian QDS's for which either \(C\) or \(\tilde{C} := 2C \otimes_s I\) is degenerate is a null set. 
\end{lemma}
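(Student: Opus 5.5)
The plan is to follow the template of the earlier lemma on defective or singular \(X\). Define the bad set by polynomial equations in the entries of \(C\), show that this variety is proper, and conclude that it is Lebesgue-null in \(\mathbb{M}_{2m}(\mathds{R})\). Then lift to \(\mathcal{G}_G\) via the product measure of Proposition \ref{qdsmeas}. Here "degenerate" is read as "has a repeated eigenvalue". The argument then shows that the set of \(C\) for which \(C\) or \(\tilde{C}\) has a repeated eigenvalue is contained in a proper real algebraic variety of \(\mathbb{M}_{2m}(\mathds{R})\).

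For \(C\) itself, the condition is the vanishing of the discriminant of the characteristic polynomial of \(C\), which is a polynomial in the entries of \(C\). It is not identically zero, since a diagonal matrix with distinct entries has nonzero discriminant. Hence its zero set is Lebesgue-null in \(\mathbb{R}^{4m^2}\).

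For \(\tilde{C} = 2C\otimes_s I = \frac12 Q(2C\otimes I + 2I\otimes C)Q^T\), its entries are linear in those of \(C\). So the discriminant of its characteristic polynomial is again a polynomial in the entries of \(C\), and it remains to exhibit one \(C\) at which this polynomial is nonzero. I would diagonalise: if \(C\) is diagonalisable with eigenvectors \(v_k\) and eigenvalues \(\lambda_k\), then \(\tilde{C}\,\svec(v_kv_l^T + v_lv_k^T) = (\lambda_k+\lambda_l)\,\svec(v_kv_l^T+v_lv_k^T)\). This follows from \((C\otimes_s I)\svec(S) = \frac12\svec(CS+SC^T)\). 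These \(m(2m+1)\) vectors form a basis of \(\lor^2\mathds{R}^{2m}\), so the spectrum of \(\tilde{C}\) is \(\{\lambda_k+\lambda_l : k\le l\}\). Choosing \(C = \mathrm{diag}(1,2,4,\dots,2^{2m-1})\) makes all pairwise sums \(2^k+2^l\) (with \(k\le l\)) distinct by uniqueness of binary expansions. So the discriminant is not identically zero, and the corresponding bad set of \(C\) is also null.

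Finally, the set of generators \((C,B)\in\mathcal{G}_G\) with \(C\) in the union \(V\) of these two null sets is contained in \(V\times\mathbb{S}_{2m}(\mathds{R})\). By Fubini this has product measure zero, so its intersection with \(\mathcal{G}_G\) is null for the restricted measure. The main obstacle I expect is the step for \(\tilde{C}\): producing the explicit witness and justifying the spectrum of a symmetric Kronecker sum. A related subtlety is making sure the witness can actually be realised; this needs every real \(C\) to arise from some QDS with admissible \(B\), which is guaranteed by Proposition \ref{qdsmeas}. The rest is the standard fact that a nonzero polynomial has a null zero set.
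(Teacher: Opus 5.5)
Your proof is correct and follows the paper's route: both reduce degeneracy of \(C\), and of \(\tilde C\) (whose spectrum is \(\{\lambda_i+\lambda_j\}\)), to a polynomial condition that is not identically satisfied, and then pass to \(\mathcal{G}_G\). Your version is tidier in two places. Working directly in the \(C\)-coordinates of Proposition~\ref{qdsmeas} avoids the paper's detour through \(A-H\), whose claimed equivalence of degeneracy with that of \(C\) is false (e.g.\ \(A-H=I_2\), \(C=\Omega\)). Using the discriminant of \(\tilde C\)'s characteristic polynomial with the explicit witness \(\mathrm{diag}(1,2,\dots,2^{2m-1})\) makes the paper's ``obviously not zero everywhere'' step rigorous. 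Also, the realisability of the witness that you flag is not needed, since the zero set of a nonzero polynomial is null in all of \(\mathbb{M}_{2m}(\mathds{R})\).
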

\begin{proof}
  \(det(\Omega) = 1\) so \(C=(A-H)\Omega\) has degenerate spectrum only if and only if the same is true of \(A-H\). Since \(A-H\) defines a one-to-one correspondence between pairs of \(A=-A^T,H=H^T \in \mathbb{M}_{2m}(\mathds{R})\) and the elements of \(\mathbb{M}_{2m }(\mathds{R})\) itself, and the set of real matrices having degenerate spectrum is a null set, a generic choice of \(A, H\) results in \(C\) having \(2m\) distinct eigenvalues. The additional condition \(iA + B \geq 0\) does not restrict \(A\) to a null set and so in particular it cannot restrict to the null set of \(A,H\) leading to degenerate eigenvalues.\\

  If \(C\) has spectrum \(\{\lambda_1, \hdots , \lambda_{2m}\}\), then the spectrum of \(\tilde{C}\) is \(\{\lambda_i + \lambda_j | \{i,j\} \in \mathcal{I}_{2m}\}\) which is degenerate if \(\lambda_i + \lambda_j - \lambda_k - \lambda_{l} = 0\) for some \(\{i,j\} \neq \{k,l\}\). This is an algebraic variety which must either be zero everywhere or a null set, and the former is obviously false.\\
\end{proof}
The following Lemma is a restatement of a result from \cite{rall2025quantum} for the present context.
\begin{lemma}
  \label{cayleylemma}
  For any \(t \geq 0\) the following holds:
\begin{enumerate}[(i)]
\item The set of \(C \in \mathbb{M}_{2m}(\mathds{R})\) for which \(X_t := e^{tC}\) cannot be expressed as a linear combination of \(\{X_{t_k}\:|\: k=1,\hdots, 2m\}\) is a null set in \(\mathbb{M}_{2m}(\mathds{R})\). 
\item The set of \(C \in \mathbb{M}_{2m}(\mathds{R})\) for which \(\tilde{X}_t := e^{t(2C \otimes_s I)}\) cannot be expressed as a linear combination of \(\{\tilde{X}_{t_k}\:|\: k=1,\hdots, {m(2m + 1)}\}\) is a null set in \(\mathbb{M}_{2m}(\mathds{R})\). 
\end{enumerate}
\end{lemma}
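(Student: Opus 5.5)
We read the statement with the times \(t_1,\hdots,t_{2m}\) (respectively \(t_1,\hdots,t_{m(2m+1)}\)) fixed and pairwise distinct, and view everything as a function of the generator \(C\). Part (ii) should follow from the same argument as part (i) once we note one identity. Since \(e^{tC}\otimes_s e^{tC}\) restricted to the symmetric subspace equals \(e^{t(C\otimes I + I\otimes C)}\), we get \(\tilde{X}_t = X_t \otimes_s X_t\). Its generator \(\tilde C = 2C\otimes_s I\) acts on a space of dimension \(n := m(2m+1)\). So we prove a single statement for a generator \(G\) on \(\mathds{R}^n\), namely \(G=C\) with \(n=2m\) or \(G=\tilde C\) with \(n=m(2m+1)\).

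\textbf{Step 1 (the diagonalisable case).} Suppose \(G\) has \(n\) distinct eigenvalues \(\mu_1,\hdots,\mu_n\), with spectral projectors \(P_k\). Then \(e^{sG}=\sum_k e^{s\mu_k}P_k\). Because the \(P_k\) are linearly independent, \(e^{tG}\) lies in \(\text{span}\{e^{t_lG}\}\) whenever the exponential Vandermonde matrix \(V(\mu)=(e^{t_l\mu_k})_{k,l=1}^n\) is invertible. In that case we solve \(V(\mu)c = (e^{t\mu_k})_k\), and the coefficients \(c_l\) give \(e^{tG}=\sum_l c_l e^{t_lG}\). The coefficients are real because \(G\) is real: conjugate eigenvalues produce conjugate equations, so the unique solution is invariant under conjugation. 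By Lemma~\ref{clemma}, the set of generators for which \(C\) or \(\tilde C\) is degenerate is null. It therefore suffices to show that \(\det V(\mu(C))\neq 0\) for almost every \(C\).

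\textbf{Step 2 (the determinant is generically nonzero).} The function \(f(C):=\det\big(e^{t_l\mu_k(C)}\big)\) is a symmetric alternating expression in the eigenvalues. Equivalently, \(f(C)^2\) is a symmetric function of the \(\mu_k\), so it is a real-analytic function of the entries of \(C\). One way to see this is that \(f(C)^2=\det\big(\Tr(e^{(t_k+t_l)G})\big)_{k,l}\), a Gram-type determinant built from \(\Tr(e^{t_kG}e^{t_lG})\). This is real-analytic on all of \(\mathbb{M}_{2m}(\mathds{R})\). The zero set of a real-analytic function on a connected domain is either everything or a null set. Hence it suffices to exhibit one \(C\) with \(f(C)\neq 0\).

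For \(G=C\), we take \(C=\text{diag}(\mu_1,\hdots,\mu_{2m})\) with distinct real \(\mu_k\). For \(\tilde C\), the eigenvalues \(\mu_i+\mu_j\) of a diagonal \(C\) are also real and distinct for generic choices. In both cases \(V\) is a generalised Vandermonde matrix \((x_l^{\mu_k})\) with \(x_l=e^{t_l}\) positive and distinct and real distinct exponents. Such a matrix is nonsingular by the classical result that generalised Vandermonde matrices are totally positive, or equivalently by Descartes' rule of signs for exponential sums: \(\sum_k c_k e^{\mu_k s}\) has at most \(n-1\) real zeros. This gives the required witness, and (i) and (ii) follow.

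\textbf{Main obstacle.} The step needing most care is the analyticity and non-degeneracy of \(f\) in Step 2. The eigenvalues are not globally analytic in \(C\), so I would work with the symmetric quantity \(f^2\), written through traces as above, rather than with \(f\) itself. I would also note that the statement needs the \(t_k\) to be distinct, which I impose; without it, the claim fails for every \(C\). The final passage from \(V\) invertible to the linear-combination statement also uses the linear independence of the spectral projectors, which holds exactly when the spectrum is non-degenerate.
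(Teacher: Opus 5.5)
Your proof is correct and follows essentially the same route as the paper. You use the generically non-degenerate spectrum (Lemma~\ref{clemma}) to reduce the claim to invertibility of the exponential Vandermonde matrix $(e^{t_l\mu_k})$, then show its determinant is a non-trivial analytic function of $C$, with a real diagonal $C$ as witness via the generalised Vandermonde theorem. Your device of working with $f(C)^2=\det\big(\Tr\, e^{(t_k+t_l)G}\big)_{k,l}$ tightens the paper's step: the paper relies on analytic dependence of the eigenvalues on the matrix entries, which holds only locally and leaves $\det Z_C$ defined only up to the ordering of the eigenvalues, whereas your expression is globally real-analytic on $\mathbb{M}_{2m}(\mathds{R})$ (and even vanishes automatically whenever the spectrum is degenerate).
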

\begin{proof}
  By lemma \ref{clemma}, \(C\) can be assumed to have \(2m\) distinct eigenvalues. Therefore the Cayley-Hamilton theorem can be applied to state that any \(X_t = e^{tC}\) can be expressed as a polynomial in \(C\) of finite degree:
\begin{equation}
  e^{tC} = \sum_{i=0}^{2m-1} y_i(t)C^i \quad \text{with} \quad y_i(t) = \sum_{j=1}^{2m}V_{i,j}e^{\mu_j t}
  \label{eq:ch}
\end{equation}
and \(V\) the inverse of a confluent Vandermonde matrix \cite{hedemann2017explicit} depending on the eigenvalues \(\mu_j\) of \(C\). So \(2m\) integer powers of \(C\) fully determine any \(X_t\).  Supposing there exist functions \(\beta(t)\) through which \(X_t\) can be expressed in terms of \(X_{t_k}\) for a set of fixed times \(t_k\) as \(\sum_{k =1}^{2m} \beta_k(t) X_{t_k} \), then substituting in both sides of \eqref{eq:ch} gives
\begin{equation*}
\begin{split}
\sum_{i=0}^{2m - 1} y_i(t)C^i &= \sum_{i=0}^{2m - 1} \sum_{k=1}^{2m} \beta_k(t)  y_i(t_k) C^i\\
\sum_{i=0}^{2m - 1}  \sum_{j=1}^{2m} V_{ij} e^{\mu_j t}C^i &= \sum_{i=0}^{2m - 1} \sum_{j=1}^{2m} \beta_k(t) \sum_{k=1}^{2m} V_{ij} e^{\mu_j t_k} C^i\\
\end{split}
\end{equation*}
which implies that for \(\forall j, k \in \{1,\hdots , n\}\),
\begin{equation*}
  e^{\mu_j t }  = e^{\mu_j t_{k}} \beta_{k}(t).
\end{equation*}
The desired \(\beta(t)\)'s  can be determined if the matrix \([Z_C]_{j, k} = e^{\mu_j t_k} \in \mathbb{M}_{2m }(\mathds{R})\) is invertible. This is a generalization of a Vandermonde matrix, which is always invertible for real \(e^{\mu_j}\) and distinct \(t\) \cite{yang2001generalization}. This is no longer true for complex eigenvalues, but those \(C\) for which it is not invertible can be shown to be a null set: distinct eigenvalues depend analytically on matrix entries so that the map \(C \mapsto \det(Z_C)\) is analytic. Since it is not constant zero, its zero-set is a null set \cite{mityagin2015zero}. The proof of (ii) follows nearly identically if one notes that the eigenvalues of \(\tilde{C}\) depend analytically on those of \(C\) and that \(\tilde{C}\) can be assumed to have \({m(2m+1)}\) eigenvalues by lemma \ref{clemma}.
\end{proof}

The proof of \cite{rall2025quantum} Thm 3. may be adapted to the present scenario to give the following theorem:
\begin{theorem}\label{thm:semigroup}
Fix a sequence \((t_i)_{i=1}^n\) of arbitrarily spaced times \(t_i \in \mathds{R}^+\). For every pair \((C,B) \in \mathcal{G}_G\) generating a QDS except for a null set in \(\mathcal{G}_G\) the following hold:
\begin{enumerate}[(i)]
\item The sequence
\(        (\alpha_{t_i})_{i=1}^n := \left(\bra{a} (X_{t_i} \otimes_s X_{t_i}) \ket{\gamma} \right)_{i = 1}^{m(2m+1)}
 \)        determines \(\alpha_{t_i}\) for all \(t_i \in \mathds{R}^+\).
        \item The sequence
\(        (\beta_{t_i})_{i=1}^n := \left(\bra{b} X_{t_i}  \ket{d} \right)_{i = 1}^{2m}
 \)        determines \(\beta_{t_i}\) for all \(t_i \in \mathds{R}^+\).
\end{enumerate}
\end{theorem}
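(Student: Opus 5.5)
The plan is to reduce both statements to Lemma \ref{cayleylemma}, using linearity of the measurement maps, and then to handle the quantifiers over $(C,B)$ by a null-set argument on $\mathcal{G}_G$. Part (ii) is the simpler case, so I treat it first. Fix the times $t_1,\dots,t_{2m}$ (the first $2m$ entries of the given sequence). By Lemma \ref{cayleylemma}(i), outside a null set of generators $C\in\mathbb{M}_{2m}(\mathds{R})$ there are coefficients $\beta_k(t)$, depending only on $C$ and the $t_k$, such that for every $t\ge 0$
\begin{equation*}
X_t=\sum_{k=1}^{2m}\beta_k(t)X_{t_k}.
\end{equation*}
Applying the linear functional $M\mapsto\bra{b}M\ket{d}$ to both sides gives $\beta_t=\sum_k\beta_k(t)\beta_{t_k}$. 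Since $C$ is known, the coefficients $\beta_k(t)$ are known, so the finitely many measured values determine $\beta_t$ for all $t\in\mathds{R}^+$. The $\beta_k(t)$ are obtained by inverting the generalized Vandermonde matrix $Z_C$.

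For part (i) I first need to express $X_t\otimes_s X_t$ as the exponential of a single generator. I would check that
\begin{equation*}
X_t\otimes_s X_t=e^{t\tilde C},\qquad \tilde C=2C\otimes_s I .
\end{equation*}
To see this, differentiate $\tfrac12\svec(X_tSX_t^T+X_tSX_t^T)$ at $t=0$: the derivative is $\svec(CS+SC^T)=(2C\otimes_s I)\svec(S)$. The maps $t\mapsto X_t\otimes_s X_t$ form a semigroup, by the same identity that gives \eqref{symmkronpowers}, namely $(AB)\otimes_s(AB)=(A\otimes_s A)(B\otimes_s B)$. A semigroup with this derivative at $0$ is $e^{t\tilde C}$. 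Then Lemma \ref{cayleylemma}(ii) gives $\tilde X_t=\sum_{k=1}^{m(2m+1)}\tilde\beta_k(t)\tilde X_{t_k}$ outside a null set of $C$. Applying the functional $M\mapsto\bra a M\ket\gamma$ yields $\alpha_t=\sum_k\tilde\beta_k(t)\alpha_{t_k}$, as required.

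It remains to transfer the exceptional sets to $\mathcal{G}_G$. The bad sets are null in $\mathbb{M}_{2m}(\mathds{R})$ and depend only on $C$, not on $B$. Their preimage in $\mathbb{M}_{2m}(\mathds{R})\times\mathbb{S}_{2m}(\mathds{R})$ is therefore a product $N\times\mathbb{S}_{2m}$ with $N$ null, which has product measure zero by Fubini. Restricting to $\mathcal{G}_G$ keeps the measure zero, using the measure of Proposition \ref{qdsmeas}. I also need the fact that $C=(A-H)\Omega$ ranges over all of $\mathbb{M}_{2m}(\mathds{R})$, so that the parametrization by $C$ matches the one used in Lemma \ref{clemma}. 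The union of the exceptional sets for (i) and (ii) is still null.

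The main obstacle is the step inherited from Lemma \ref{cayleylemma}: that $\det Z_C$, with $[Z_C]_{j,k}=e^{\mu_jt_k}$, is not identically zero for arbitrarily spaced fixed times. In part (ii) this must also hold when the $\mu_j$ are replaced by the pairwise sums $\lambda_i+\lambda_j$. My first attempt would be to take $C$ real diagonal with distinct eigenvalues chosen so that the sums are also distinct. Then the $e^{\mu_j}$ are distinct positive reals, and the generalized Vandermonde result gives $\det Z_C\neq0$ at that point. Analyticity of $\det Z_C$ in $C$ on the open set where the eigenvalues are distinct then makes its zero set null. The remaining subtlety is that this open set may be disconnected, and non-vanishing must be checked on each component. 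If that proves awkward, I would instead argue through symmetric functions of the eigenvalues, which are polynomial in the entries of $C$.
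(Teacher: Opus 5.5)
Your proposal is correct and follows essentially the same route as the paper. Both parts come from applying the linear functionals $M\mapsto\bra{b}M\ket{d}$ and $M\mapsto\bra{a}M\ket{\gamma}$ to the expansions of Lemma \ref{cayleylemma}, using $X_t\otimes_s X_t=e^{t\tilde C}$ with $\tilde C=2C\otimes_s I$, and then transferring the null set of generators $C$ to $\mathcal{G}_G$. The differences are minor improvements on the paper:
\begin{enumerate}[(i)]
\item Your derivative-plus-semigroup check of the exponential identity replaces the paper's power series with $Q$. It also fixes the paper's harmless stray factor of $2$ in $\tilde X_t=2X_t\otimes_s X_t$, since $X_0\otimes_s X_0=I=e^{0}$.
\item Your worry about the possibly disconnected distinct-eigenvalue locus flags a step that the paper's proof of Lemma \ref{cayleylemma} passes over. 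Your symmetric-function fallback repairs it, for instance by working with $(\det Z_C)^2$, which is analytic on all of $\mathbb{M}_{2m}(\mathds{R})$ and nonzero at a real diagonal $C$.
\end{enumerate}
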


\begin{proof}
  In order to show (i), first note that
\begin{equation*}
X_t \otimes X_t = e^{t(C \otimes I + I \otimes C)}
\end{equation*}
therefore 
\begin{equation*}
\begin{split}
2X_t \otimes_s X_t &=  Q e^{t(C \otimes I + I \otimes C) } Q^T \\
            &=  \sum_{k=0}^{\infty} \frac{t^k}{k!} Q(C \otimes I + I \otimes C)^k Q^T\\
            &= \sum_{k=0}^{\infty} \frac{(2t)^k}{k!} (\frac{1}{2}Q(C \otimes I + I \otimes C) Q^T )^k \\
            &=  e^{2t(C \otimes_s I)}.
\end{split}
\end{equation*}
For notional convenience denote \(\tilde{X}_t = 2 X_t \otimes_s X_t \) and \(\tilde{C} = 2 C \otimes_s I\), then
\begin{equation}
\tilde{X}_t = e^{t\tilde{C}}.
\end{equation}
Lemma \ref{cayleylemma} states that for all \(C\) except for a null set, 
\begin{equation}
  \begin{split}
    \tilde{X}_t &= \sum_{k=1}^{m(2m+1)} \left(\sum_{j=1}^{m(2m+1)} (Z_{\tilde{C}}^{-1})_{j,k} e^{\mu_j t}\right) \tilde{X}_{t_k}\\
    \bra{a} X_t \otimes_s X_t \ket{\gamma} &= \sum_{k=1}^{m(2m+1)} \left(\sum_{j=1}^{m(2m+1)} (Z_{\tilde{C}}^{-1})_{j,k} e^{\mu_j t}\right) \bra{a} X_{t_k} \otimes_s X_{t_k} \ket{\gamma}.
  \end{split}
\end{equation}
In order to determine the outcome of the measurement at any time \(t\) it is therefore sufficient to measure \(\mathbf{b}^T\mathbf{R}\) at just \(m(2m+1)\) different arbitrarily spaced times \(t_k\). Following an identical argument, \(X_t = e^{tC}\) can be expressed in terms of \(X_{t_k}\) for the same set of fixed times \(t_k\):
\begin{equation}
\begin{split}
    X_t &= \sum_{k=1}^{2m} \left(\sum_{j=1}^{2m} (Z_C^{-1})_{j,k} e^{\mu_j t}\right) X_{t_k}\\
    \bra{b} X_t \ket{d} &= \sum_{k=1}^{2m} \left(\sum_{j=1}^{2m} (Z_C^{-1})_{j,k} e^{\mu_j t}\right) \bra{b} X_{t_k} \ket{d},
\end{split}
\end{equation}
though only \(2m\) times are needed, to match the dimension of \(C\).\\

Finally, note that any null set of \(C\)'s corresponds to a null set of \(\mathcal{G}_G\) w.r.t. the restriction of the product measure defined in Proposition \ref{qdsmeas}.
\end{proof}

\section{Tomography}

Suppose that one is given independent copies of an unknown \(m\)-mode Gaussian state \((\Gamma,d)\) along with a fixed single-mode homodyne measurement \(\mathbf{b}^T \mathbf{R}\), and a known discrete-time Gaussian evolution \((X,Y)\) which may be applied \(i\) times before performing the measurement.

\begin{theorem}\label{thm:tomography}
  Define the maps \(\alpha: \mathbb{S}_{2m}(\mathds{R}) \rightarrow \mathds{R}^{m(2m+1)}\) such that \\
  \begin{equation}\label{eq:covseries}
  \alpha(\Gamma) :=  \left(\bra{a} (X^i \otimes_s X^i) \ket{\gamma} \right)_{i=t_0}^{t_0 + m(2m+1) - 1}
\end{equation}
and \(\beta: \mathds{R}^{2m} \rightarrow \mathds{R}^{2m}\) such that\\
\begin{equation}
\beta(\mathbf{d}) := \left( \bra{b} X^i \ket{d} \right)_{i = t_0}^{t_0 + 2m - 1}
\end{equation}
Both maps are injective for any \(X \in \mathbb{M}_{2m }(\mathds{R})\) except for a null set depending on \(\mathbf{b}\). 
Hence GDQST with a fixed single-mode homodyne measurement performed at \(m(2m+1)\) consecutive time steps is generically sufficient to uniquely determine a multi-mode Gaussian state.
\end{theorem}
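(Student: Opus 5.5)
Both maps are linear, so the plan is to show that each is represented by a square matrix whose determinant is a polynomial in the entries of \(X\) (for fixed \(\mathbf{b}\)) that does not vanish identically. Such a polynomial has a zero set that is a real algebraic variety, hence a null set; this is the same argument used in the lemma on defective \(X\) and in the proposition on homodyne settings.

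For \(\beta\), write \(\beta(\mathbf{d}) = K_\beta \mathbf{d}\), where the rows of the \(2m\times 2m\) matrix \(K_\beta\) are \(\bra{b}X^{i}\) for \(i=t_0,\dots,t_0+2m-1\). Factoring out \(X^{t_0}\) gives \(K_\beta = \mathcal{O}(X,\mathbf{b})X^{t_0}\), where \(\mathcal{O}\) is the Kalman observability matrix with rows \(\bra{b}X^k\), \(k=0,\dots,2m-1\). Off the null set where \(\det X=0\), \(\det K_\beta\neq 0\) is equivalent to \(\det\mathcal{O}\neq0\), which is a polynomial condition in the entries of \(X\).

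To show \(\det\mathcal{O}\) is not identically zero, I will exhibit one \(X\) making \(\mathbf{b}\) cyclic for \(X^T\). Take \(X^T=\mathrm{diag}(\lambda_1,\dots,\lambda_{2m})\) in an orthonormal basis \(\{f_k\}\) with all \(\braket{f_k}{b}\neq0\), for instance a basis in which \(\mathbf{b}\) is proportional to \((1,\dots,1)\). With distinct \(\lambda_k\), \(\det\mathcal{O}\) equals a Vandermonde determinant times \(\prod_k\braket{f_k}{b}\), which is nonzero. The exceptional set therefore depends on \(\mathbf{b}\) but is null for each fixed \(\mathbf{b}\).

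For \(\alpha\), the same factorisation gives \(K_\alpha=\mathcal{O}(X\otimes_s X,\ket{a})\,(X\otimes_s X)^{t_0}\), using \eqref{symmkronpowers}. Moreover \(\det(X\otimes_s X)\) is a power of \(\det X\), so the second factor is invertible off a null set. The entries of \(X\otimes_s X\) are quadratic in those of \(X\), so \(\det\mathcal{O}(X\otimes_s X,\ket a)\) is again a polynomial in \(X\). The main obstacle is finding a single witness \(X\) for which \(\ket a=\svec(\mathbf{b}\mathbf{b}^T)\) is cyclic for \((X\otimes_s X)^T\). This needs two things:
\begin{enumerate}[(a)]
\item the eigenvalues \(\lambda_i\lambda_j\), \(\{i,j\}\in\mathcal{I}_{2m}\), of \(X\otimes_s X\) are pairwise distinct;
\item \(\ket a\) has nonzero overlap with every eigenvector.
\end{enumerate}
I would use the same diagonal witness \(X^T=\sum_k\lambda_k\ket{f_k}\bra{f_k}\). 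The eigenvectors of \(X^T\otimes_s X^T\) are then \(\svec\big(\ket{f_i}\bra{f_j}+\ket{f_j}\bra{f_i}\big)\), and the overlap with \(\ket a\) is proportional to \(\braket{f_i}{b}\braket{f_j}{b}\neq0\), which settles (b). For (a), choose \(\lambda_k=p_k\) to be distinct primes, or \(\lambda_k=2^{2^k}\); then all products \(\lambda_i\lambda_j\) are distinct. The determinant is then a Vandermonde determinant in the \(\lambda_i\lambda_j\) times a product of nonzero overlaps, hence nonzero.

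Taking the union of the finitely many null sets (\(\det X=0\) and the two observability determinants) gives injectivity of both \(\alpha\) and \(\beta\) for all \(X\) outside a \(\mathbf{b}\)-dependent null set. Translating to \(\mathcal{T}_G\) works as in the lemma on defective \(X\): \(X\) is unconstrained there, so a null set of \(X\) gives a null set of pairs \((X,Y)\). Since \((\Gamma_i,\mathbf{d}_i)\) and \((\Gamma_i',\mathbf{d}_i)\) determine each other when \(Y\) is known, and \(m(2m+1)\ge 2m\) steps supply both series, the tomographic conclusion follows.
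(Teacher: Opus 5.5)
Your proof is correct, but it reaches genericity by a different route from the paper's.

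\textbf{The paper's route.} It first uses the backward extension of Theorem~\ref{thm:extension} to reduce to \(t_0=0\). It then restricts to diagonalizable \(X=P\Lambda P^{-1}\) and factors \(\alpha\) as \(MN\) composed with \(P^{-1}\otimes_s P^{-1}\). Here \(M\) is the Vandermonde matrix in the products \(\lambda_\mu\lambda_\nu\), and \(N\) is diagonal with entries \(\bra{a}P\otimes_s P\ket{\psi_{\mu,\nu}}\). Genericity is argued separately for each factor. For \(\det M\), it uses a variety in the eigenvalues, with the same prime-number witness you use. For \(\det N\), it argues that \(\ket{a}\) orthogonal to some eigenvector is non-generic.

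\textbf{Your route.} You peel off \((X\otimes_s X)^{t_0}\) directly, so you need neither the extension theorem nor the diagonalizability assumption. You then treat \(\det\mathcal{O}(X\otimes_s X,\ket{a})\) as a single polynomial in the entries of \(X\). One witness point therefore suffices, and the Vandermonde-times-overlap computation is done only at that witness.

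\textbf{What your route buys.} For each fixed \(\mathbf{b}\), the exceptional set is visibly a proper real algebraic subvariety of \(\mathbb{M}_{2m}(\mathds{R})\). The paper's argument is looser on this point in two places:
\begin{itemize}
\item Its \(\det N\) step phrases the null set in terms of \(\ket{a}\), which is fixed by \(\mathbf{b}\).
\item Its \(\det M\) step uses the eigenvalues as coordinates. Transferring that to a null set of matrices needs an extra step, for instance noting that \((\det M)^2\) is symmetric in the \(\lambda_\mu\) and hence polynomial in the entries of \(X\).
\end{itemize}

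\textbf{What the paper's route buys.} Its factorization describes the failure set explicitly for every diagonalizable \(X\): either a coincidence \(\lambda_\mu\lambda_\nu=\lambda_{\mu'}\lambda_{\nu'}\), or \(\ket{a}\) orthogonal to an eigenvector of the transposed evolution. This is exactly what the paper uses next for the no-go corollaries, namely reciprocal eigenvalue pairs for symplectic \(X\) and vanishing overlaps for block-diagonal \(X\). Your witness argument proves genericity but does not by itself identify which \(X\) fail.
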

\begin{proof}
  First note that by Theorem \ref{thm:extension}, these maps may be extended backward to \(\min\{j_0(X \otimes_s X), t_0\}\) and \(\min\{j_0(X), t_0\}\) respectively. The set of \(X \in \mathbb{M}_{2m}(\mathds{R})\) having trivial kernel corresponds to \(det(X) = 0\) and is therefore a null set so that one can assume \(j_0(X) = 0\). If \(X\) has non-zero eigenvalues \(\{\lambda_1, \hdots , \lambda_n\}\) then \(X \otimes_s X\) has non-zero eigenvalues \(\{\lambda_i \cdot \lambda_j |\{i,j\} \in \mathcal{I}_{2m}\}\) so it can be assumed that \(j_0(X \otimes_s X) = 0\). Hence, given a sufficiently long initial \enquote*{seed} one can restrict to \(t = 0, ..., m(2m+1) - 1\) for \(\alpha\) and \(t = 0, \hdots, 2m - 1\) for \(\beta\).\\

Measuring \(\var(\mathbf{b}^{T}R)\) after an \(i\)-fold application of the channel \((X,Y)\) corresponds to a dual vector 
%\(\bra{11}(P \otimes P)(\Lambda^i \otimes \Lambda^i)(P^{-1} \otimes P^{-1})\) acting on \(\ket{\gamma}\)
\(\bra{a} (X^i \otimes_s X^i)\) acting on the symmetrically vectorized covariance matrix \(\ket{\gamma}\).
The set of real matrices which are not diagonalizable form a null set, so write \(X^i = P \Lambda^i P^{-1} \), with \(\Lambda = \sum_{\mu=1}^{2m}\lambda_\mu \ketbra{\mu} \in \mathds{C}^{2m\times 2m}\) diagonal and \(P \in \mathbb{M}_{2m }(\mathds{R})\). Then
\begin{equation*}
    (X \otimes_s X)^i = P \otimes_s P (\Lambda \otimes_s \Lambda)^i P^{-1} \otimes_s P^{-1}.
\end{equation*}
Absorbing the diagonalizing matrices as \(\bra{a' } := \bra{a} P \otimes_s P \) and \(\ket{\gamma' } := P^{-1} \otimes_s P^{-1} \ket{\gamma}\), one is left with
\begin{equation}
  (\Lambda \otimes_s \Lambda)^i = \sum_{\{\mu,\nu\} \in \mathcal{I}_{2m}} (\lambda_{\mu} \cdot \lambda_{\nu})^i \ketbra{\psi_{\mu,\nu}}.\\
\end{equation}
The map \(\alpha(\Gamma)\) can then be written as
\begin{equation}
  \begin{split}
          \alpha: \Gamma \mapsto & \sum_{i=0}^{m(2m+1) - 1} \bra{a' } (\Lambda^i \otimes_s \Lambda^i) \ket{\gamma '} \ket{i}\\
          &= \sum_{i=0}^{m(2m+1) - 1}\sum_{\{\mu,\nu\} \in \mathcal{I}_{2m}} (\lambda_{\mu} \cdot \lambda_{\nu})^i \braket{a' }{\psi_{\mu, \nu}} \braket{\psi_{\mu,\nu}}{\gamma '} \ket{i}\\
    &= MN \ket{\gamma' }
  \end{split}
\end{equation}
where \([N]_{ii} := \braket{a' }{\psi_{i}} \in \lor^2 \mathds{R}^{2m}\) is diagonal and
\([M]_{i,\{\mu,\nu\}} := (\lambda_{\mu} \cdot \lambda_{\nu})^i \in \lor^2 \mathds{R}^{2m}\) is a Vandermonde matrix. \\

To see that the map is invertible almost everywhere, first note that
\begin{equation}\label{eq:detN}
\text{det}(N) = \prod_{\{\mu, \nu\}} \bra{a} P \otimes_s P \ket{\psi_{\mu,\nu}}.
\end{equation}
Since \(P \otimes_s P\) is invertible, the vectors \( P \otimes_s P \ket{\psi_{\mu,\nu}}\) form a basis for \(\lor^{2} \mathds{R}^{2m}\) and hence the \(\ket{a}\) orthogonal to one or more of these belong to a null set. Next, note that\(M\) has determinant
\begin{equation}\label{det}
\text{det}(M) = \prod_{\{\mu,\nu\} < \{\mu' , \nu' \}}(\lambda_{\mu} \cdot \lambda_{\nu} - \lambda_{\mu' } \cdot \lambda_{\nu' })
\end{equation}
where \(<\) is with respect to lexicographic ordering. The zero set hereof is an algebraic vareity which either fills the entire space or is a manifold of one fewer dimension and hence of Lebesgue measure zero. To exclude the possibility of it being zero everywhere, suppose that the \(i\)-th eigenvalue is the \(i\)-th prime, then \(\lambda_{\mu} \cdot \lambda_{\nu} - \lambda_{\mu' } \cdot \lambda_{\nu' } = 0 \iff \lambda_{\mu} = \lambda_{\mu' } \text{  and  } \lambda_{\nu} = \lambda_{\nu' }\) which cannot be since \(\{\mu,\nu\} < \{\mu' , \nu' \}\).\\

Similarly, the displacement vector is acted upon by a dual vectors \(\bra{1}X^i\). Denoting \(\bra{\varphi} := \bra{1}P \) and \(\ket{d' } := P^{-1} \ket{d}\),
\begin{equation}
  \begin{split}
    \ket{d} \mapsto &\sum_{i=0}^{2m - 1} \bra{1} X^i \ket{d}\ket{i}\\
    &= \sum_{i=0}^{2m - 1} \bra{\varphi} \Lambda^i \ket{d' }\ket{i}\\
    &= \sum_{i=0}^{2m - 1} \sum_{j=1}^{2m} (\lambda_j)^i \braket{\varphi}{j}\braket{j}{d' }\ket{i}.
  \end{split}
\end{equation}
Again, this yields \(\ket{d} \mapsto AB \ket{d' }\) with a Vandermonde matrix \([A]_{ji} = (\lambda_j)^i\) having \(\det(A) = \prod_{j < j' }(\lambda_j - \lambda_{j' })\) and a diagonal matrix \([B]_{jj} = \braket{\varphi}{j}\). A nearly identical argument shows that these are invertible for all \(X\) and \(\mathbf{b}\) excluding a null set.
\end{proof}

\begin{corollary}
  The maps
  \begin{equation}\label{eq:contcovseries}
\mathbb{S}_{2m}(\mathds{R}) \ni \Gamma \mapsto (\bra{a} X_{t_k} \otimes_s X_{t_k} \ket{\gamma})_{k=1}^{m(2m+1)}
  \end{equation}
  and
  \begin{equation}
\mathds{R}^{2m} \ni \mathbf{d} \mapsto (\bra{b} X_{t_k} \ket{d})_{k=1}^{2m}
   \end{equation}
   are invertible for almost any Gaussian channel \((X_t, Y_t)_{t \geq 0}\) forming a quantum dynamical semigroup.
\end{corollary}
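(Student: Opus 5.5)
Since \(\tilde{X}_t = 2X_t\otimes_s X_t = e^{t\tilde C}\) with \(\tilde C = 2C\otimes_s I\) (established in the proof of Theorem \ref{thm:semigroup}), I would mirror the structure of the proof of Theorem \ref{thm:tomography}, replacing integer powers of a fixed matrix by exponentials at the fixed times \(t_k\).

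The first step is to diagonalize. By Lemma \ref{clemma}, outside a null set of \(\mathcal{G}_G\) both \(C\) and \(\tilde C\) have simple spectra. Write \(C = P D P^{-1}\) with \(D=\mathrm{diag}(\mu_1,\hdots,\mu_{2m})\). Then \(X_t = P e^{tD}P^{-1}\) and \(X_t\otimes_s X_t = (P\otimes_s P)(e^{tD}\otimes_s e^{tD})(P^{-1}\otimes_s P^{-1})\). The middle factor is diagonal in the basis \(\ket{\psi_{\mu,\nu}}\), with entries \(e^{t(\mu_\mu+\mu_\nu)}\). Absorbing the outer factors into \(\bra{a'}\) and \(\ket{\gamma'}\) as before, the map \eqref{eq:contcovseries} factors as \(\ket{\gamma}\mapsto M N \ket{\gamma'}\). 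Here \(N\) is diagonal with entries \(\braket{a'}{\psi_{\mu,\nu}}\), and \(M\) is the \(m(2m+1)\times m(2m+1)\) matrix with entries \([M]_{k,\{\mu,\nu\}} = e^{t_k(\mu_\mu+\mu_\nu)}\), which is exactly \(Z_{\tilde C}^T\) from Lemma \ref{cayleylemma}. The displacement map factors analogously as \(A B\ket{d'}\), with \([A]_{kj}=e^{\mu_j t_k} = [Z_C]_{jk}\) and \(B\) diagonal with entries \(\braket{\varphi}{j}\).

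The second step is invertibility of each factor. For \(N\) and \(B\), the argument from the proof of Theorem \ref{thm:tomography} carries over verbatim: \(\det N = \prod \bra{a}P\otimes_s P\ket{\psi_{\mu,\nu}}\) vanishes only if \(\ket{a}\) is orthogonal to one of the basis vectors \(P\otimes_s P\ket{\psi_{\mu,\nu}}\). For \(M\) and \(A\), I would invoke the proof of Lemma \ref{cayleylemma}. There, \(C\mapsto \det Z_C\) and \(C\mapsto\det Z_{\tilde C}\) are analytic on the open set of simple-spectrum \(C\), and they are not identically zero: for real distinct eigenvalues, generalized Vandermonde matrices with distinct \(t_k\) are nonsingular. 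Hence their zero sets are null. Finally, as in the last line of the proof of Theorem \ref{thm:semigroup}, a null set of \(C\)'s lifts to a null set in \(\mathcal{G}_G\) under the product measure of Proposition \ref{qdsmeas}.

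I expect the main obstacle to be the factor \(N\) (and \(B\)), because here \(\ket{a}=\svec(\mathbf b\mathbf b^T)\) is fixed and only \(C\) varies. Unlike Theorem \ref{thm:tomography}, which allowed the null set to depend on \(\mathbf b\), I must show that for fixed \(\mathbf b\) the set of \(C\) with \(\bra{a}P\otimes_s P\ket{\psi_{\mu,\nu}}=0\) for some pair is null in \(C\). Note that \(\bra{a}P\otimes_s P\ket{\psi_{\mu,\nu}}\) is proportional to \((\mathbf b^T p_\mu)(\mathbf b^T p_\nu)\), where \(p_\mu\) are the eigenvectors of \(C\). So the condition reduces to \(\mathbf b^T p_\mu \ne 0\) for all \(\mu\); this is the same condition needed for \(B\). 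I would handle it by noting that the left eigenvectors, hence these products, depend analytically on \(C\) locally on the simple-spectrum set, so each vanishing condition is the zero set of a nonconstant analytic function. Nonconstancy can be checked by rotating a diagonal example. The zero set is then null by \cite{mityagin2015zero}, as in Lemma \ref{cayleylemma}.
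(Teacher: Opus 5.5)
Your argument is correct, but it takes a different route from the paper's. The paper proves the corollary by a two-step reduction. It uses the continuous-time extension (Theorem~\ref{thm:semigroup}, although the proof cites Theorem~\ref{thm:extension}) to pass from the data at $t_1,\dots,t_{m(2m+1)}$ to the data at the integer times $0,\dots,m(2m+1)-1$. It then applies Theorem~\ref{thm:tomography} to the discrete channel $X=X_1=e^{C}$.

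That route is shorter and reuses the discrete theorem wholesale, but it costs two things:
\begin{itemize}
\item It stacks two sets of generic conditions: invertibility of $Z_{\tilde C}$ \emph{and} the discrete conditions for $e^{C}$. For example, distinctness of the $e^{\mu_\mu+\mu_\nu}$ fails when two pair-sums differ by a nonzero element of $2\pi i\mathbb{Z}$, even if $Z_{\tilde C}$ is invertible.
\item It tacitly needs the exceptional set of Theorem~\ref{thm:tomography}, which lives in $X$-space, to pull back to a null set under $C\mapsto e^{C}$. This is true, since $\exp$ is a local diffeomorphism off a null set, but it is not argued.
\end{itemize}
Your direct factorization at the actual times, with $M=Z_{\tilde C}^T$ and $A=Z_C^T$, skips the detour through integer times. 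It also makes the exceptional set explicit: non-simple spectrum, $\det Z_{\tilde C}=0$, $\det Z_C=0$, or $\mathbf b^Tp_\mu=0$ for some $\mu$.

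You are also right that the paper's argument for $\det N\neq 0$ gives genericity in $\ket{a}$, not in $X$ for fixed $\mathbf b$. In any case $\ket{a}=\svec(\mathbf b\mathbf b^T)$ only ranges over a lower-dimensional set. Your reduction of both $\det N\neq0$ and $\det B\neq0$ to $\prod_\mu\mathbf b^Tp_\mu\neq0$ settles this.

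A few points to tighten:
\begin{itemize}
\item \emph{Connectedness.} The set of real matrices with simple spectrum is disconnected, because the number of real eigenvalues is locally constant on it. So a single non-vanishing example with real eigenvalues (diagonal or rotated diagonal) does not by itself control the components with complex eigenvalues. Lemma~\ref{cayleylemma} has the same gap.
\item \emph{Eigenvector condition.} You can avoid analyticity entirely. Let $K$ have rows $\mathbf b^TC^i$, $i=0,\dots,2m-1$. Then $\det K=\det V\cdot\prod_\mu(\mathbf b^Tp_\mu)\cdot\det P^{-1}$, with $V$ the Vandermonde matrix of the eigenvalues. So on the simple-spectrum set your condition is $\det K\neq 0$. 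This is a polynomial in $C$, equal to $\pm1$ at $C=R^TSR$, where $S$ is the upper shift and $R$ is orthogonal with $R\mathbf b=e_1$.
\item \emph{The matrices $Z_C$ and $Z_{\tilde C}$.} Note that $(\det Z_C)^2$ and $(\det Z_{\tilde C})^2$ are symmetric entire functions of the respective eigenvalues. Hence they are entire functions of characteristic-polynomial coefficients, which are polynomials in $C$. So they are real-analytic on all of the connected space $\mathbb{M}_{2m}(\mathds{R})$, and one real-eigenvalue example with distinct eigenvalues (resp.\ pair-sums) and distinct $t_k$ does suffice.
\item \emph{Minor.} The $p_\mu$ in $\mathbf b^Tp_\mu$ are right eigenvectors (columns of $P$), not left ones as you write at the end. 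The $t_k$ must also be distinct.
\end{itemize}
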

\begin{proof}
  This follows directly from applying Theorem \ref{thm:extension} to see that a finite time series \(t_1, \hdots, t_{m(2m+1)}\) (or \(t_1, \hdots, t_{2m}\), respectively) can be extended to all positive times and in particular to the times \(0, \hdots, m(2m+1) -1\) (or \(0, \hdots 2m -1\), respectively) for which Theorem \ref{thm:tomography} applies.
\end{proof}

Theorem \ref{thm:tomography} states that the set of evolutions for which this tomography scheme fails is a null set. It turns out however that some of the evolutions in this null set are physically relevant.
\begin{corollary}
  If the Gaussian channel is unitary, then neither the map \(\alpha\) defined in \eqref{eq:covseries} nor it's continuous-time version defined in \eqref{eq:contcovseries} is invertible and GDQST fails to indentify arbitrary initial states.
\end{corollary}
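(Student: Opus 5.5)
Since \(\alpha\) is linear, it suffices to show it is not injective: exhibit a nonzero symmetric \(\Delta\) with \(\bra{a}(X^i\otimes_s X^i)\svec(\Delta)=0\) for every \(i\) (resp.\ every \(t\geq 0\)). For a Gaussian unitary channel \(Y=0\) and \(X=S\) is symplectic, \(S\Omega S^T=\Omega\); in continuous time \(X_t=e^{tC}\) with \(C=-H\Omega\) Hamiltonian (\(A=0\), \(B=0\)), so every \(X_t\) is symplectic. As in the proof of Theorem \ref{thm:tomography}, I would work generically with \(S\) diagonalizable, \(S=P\Lambda P^{-1}\), so that \(\alpha=MN\) after the change of basis \(\ket{\gamma'}=P^{-1}\otimes_s P^{-1}\ket{\gamma}\). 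Then it is enough to show that \(M\), the Vandermonde-type matrix with columns \(((\lambda_\mu\lambda_\nu)^i)_i\), has two equal columns, i.e.\ that the products \(\lambda_\mu\lambda_\nu\), \(\{\mu,\nu\}\in\mathcal{I}_{2m}\), are degenerate.

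The key step is the symplectic spectral symmetry. Since \(S^T=\Omega^{-1}S^{-1}\Omega\), the spectrum of \(S\) is closed under \(\lambda\mapsto\lambda^{-1}\), so it can be ordered as \(\{\lambda_1,\dots,\lambda_m,\lambda_1^{-1},\dots,\lambda_m^{-1}\}\). Then \(\lambda_k\lambda_k^{-1}=1\) for each \(k=1,\dots,m\). This gives \(m\) distinct index pairs \(\{k,k+m\}\) with the same product, and for \(m\geq1\) this product also equals the value on the constant column. In particular, for \(m\geq 2\) we get at least two identical columns of \(M\). Moreover each column corresponds to a conserved quantity: \(\bra{a}(X^i\otimes_s X^i)\) restricted to the span of these \(\ket{\psi_{k,k+m}}\) is \(i\)-independent. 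So \(MN\) has rank at most \(m(2m+1)-(m-1)<m(2m+1)\), and \(\alpha\) fails to be injective. A difference \(\Delta\) in the kernel can be written down explicitly: take \(N\ket{\gamma'}\) supported on \(\ket{\psi_{1,1+m}}\) and \(\ket{\psi_{2,2+m}}\) with opposite weights, pull it back through \(P\), and note that it can be chosen real, since the pairing respects complex conjugation of eigenvalues.

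The case \(m=1\) needs separate treatment, because a single pair \(\{1,2\}\) gives no repetition. Here I would use a different invariant. Since \(\det S=1\), \(\Omega\) itself, or equivalently the matrix \(\Omega\,\cdot\) applied to the quadratic form, gives a conserved combination: \(S^T\Omega S=\Omega\). However \(\Omega\) is antisymmetric, so it does not directly live in \(\mathbb{S}_2\). Instead I would argue via counting. For a single mode the non-degenerate case \(\lambda^2,1,\lambda^{-2}\) yields an invertible \(M\), so the corollary would then rely on degeneracies in the continuous case or on the fact that unitary \(S\) for rotations gives \(\lambda=e^{\pm i\theta}\). I expect the main obstacle to be this small-\(m\) and non-diagonalizable bookkeeping. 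My fallback is to state the failure for \(m\geq2\), or to show in general that \(\alpha\) restricted to physical states fails, for instance by noting \(\det\Gamma\)-type invariants.

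For the continuous-time map \eqref{eq:contcovseries} the same argument applies with \(\lambda\) replaced by \(e^{t\mu}\), where the spectrum of a Hamiltonian \(C\) is symmetric under \(\mu\mapsto-\mu\). Hence \(\tilde C=2C\otimes_s I\) has eigenvalue \(0\) with multiplicity at least \(m\). So \(Z_{\tilde C}\) has repeated rows, and the map has a nontrivial kernel.
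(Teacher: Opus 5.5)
For $m\ge 2$, your discrete-time argument is the paper's. The symplectic pairing $\lambda\leftrightarrow\lambda^{-1}$ gives $\lambda_k\lambda_{k+m}=1$ for $k=1,\dots,m$. So the Vandermonde factor $M$ in $\alpha=MN$ has at least two identical all-ones columns, and $\alpha$ has rank at most $m(2m+1)-m+1$. A real kernel vector comes for free, since a real square matrix that is singular over $\mathds{C}$ is singular over $\mathds{R}$. You therefore need neither the conjugation-pairing remark nor invertibility of $N$.

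Your worry about $m=1$ is justified, and it is not a bookkeeping issue: the corollary is false for a single mode. The paper's proof tacitly assumes $m\ge 2$, because its vanishing factor $\lambda_\mu\lambda_\mu^{-1}-\lambda_{\mu'}\lambda_{\mu'}^{-1}$ needs two distinct pairs. For a counterexample, take the phase rotation $X=R_\theta$ (symplectic, $Y=0$) and $\mathbf{b}=e_1$. Then $\alpha(\Gamma)$ consists of the variances of the quadratures at angles $0,\theta,2\theta$. The resulting $3\times 3$ system for $(\Gamma_{11},\Gamma_{12},\Gamma_{22})$ has determinant proportional to $\sin^2\theta\,\sin 2\theta$, which is nonzero for generic $\theta$. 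In your language, $e^{2i\theta},1,e^{-2i\theta}$ are distinct and $\mathbf{b}$ is not orthogonal to the eigenvectors $(1,\mp i)$. This is just phase-scanning homodyne tomography, and the continuous-time version with $H=\omega I$ and generic times behaves the same way. So the correct repair is your first fallback, restricting to $m\ge 2$. The $\det\Gamma$-type idea cannot work: a nonlinear invariant cannot create a kernel for a linear map that is in fact injective.

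In continuous time your route differs from the paper's. The paper factors the continuous data through the discrete $\alpha$ with $X=X_1$ by appealing to Theorem~\ref{thm:semigroup}. That theorem assumes $\tilde C$ is nondegenerate, which fails precisely for Hamiltonian generators. Moreover, integer-time sampling can merge eigenvalues of $\tilde C$ that differ by nonzero multiples of $2\pi i$, so the continuous map need not factor through $\alpha$ at all. Your direct argument avoids both problems and is the sounder proof. It uses that $\tilde C$ has eigenvalue $0$ with multiplicity at least $m$, so $Z_{\tilde C}$ has repeated rows; equivalently, the rank is at most the number of distinct eigenvalues of $\tilde C$.

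Two loose ends remain for $m\ge 2$, both shared with the paper.

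First, the non-diagonalizable case you flagged is closed by continuity. The matrix of $\alpha$ has rows $\bra{a}(X\otimes_s X)^i$, so its determinant is a polynomial in the entries of $X$. Symplectic matrices with $2m$ distinct eigenvalues are dense in the connected group $\mathrm{Sp}(2m,\mathds{R})$, because the discriminant is an analytic function on it that does not vanish identically. The determinant vanishes on these matrices by your Vandermonde argument, hence on all of $\mathrm{Sp}(2m,\mathds{R})$. In continuous time, argue the same way with the Krylov matrix with rows $\bra{a}\tilde C^k$, $k<m(2m+1)$. By Cayley--Hamilton its row span contains every $\bra{a}e^{t\tilde C}$. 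Its determinant vanishes on the dense set of Hamiltonian $C$ with distinct eigenvalues, since there $\tilde C$ is diagonalizable with fewer than $m(2m+1)$ distinct eigenvalues.

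Second, to turn non-injectivity on $\mathbb{S}_{2m}(\mathds{R})$ into failure on physical states, take $\Delta$ in the kernel and add $\pm\epsilon\Delta$ to an interior covariance matrix, such as that of a thermal state. Both results are valid states and give identical data.
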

\begin{proof}
  Consider the case of a Gaussian unitary, which has \(X\) simplectic and \(Y=0\). The symplectic condition \(X^T \Omega X = \Omega\) implies \(\Omega X^{-1}= X^T \Omega\). From this it follows that if \(\lambda\) is an eigenvalue of \(X\), then \(\lambda^{-1}\) is also an eigenvalue \footnote{Since if \(X\mathbf{v} = \lambda \mathbf{v}\) then \(\mathbf{v}^TX^T \Omega = \lambda \mathbf{v}^T \Omega \) so that \(\mathbf{v}^T \Omega X^{-1} = \lambda \mathbf{v}^T \Omega\) and hence \(\lambda^{-1} (\mathbf{v}^T \Omega) = (\mathbf{v}^T \Omega) X\)}.  Recalling the expression for \(\det (M)\) in eq. \eqref{det} one finds factors \((\lambda_{\mu} \cdot \lambda_{\mu}^{-1} - \lambda_{\mu' } \cdot \lambda_{\mu' }^{-1}) = 0\) and hence \(\alpha\) is not invertible. \\
  For Gaussian unitary evolution according to a semigroup \((X_t, Y_t)_{t\geq 0}\),  Theorem \ref{thm:semigroup} states that a vector constructed from any number of \(\bra{a} \tilde{X}_t \otimes_s \tilde{X}_t \ket{\gamma}\) can always be written as a linear transformation of \(\alpha(\Gamma)\) with \(X = X_1\) unitary. Since \(\alpha\) is a non-invertible, neither is the composition. 
\end{proof}

Obviously, a single-mode measurement is not sufficient to recover a multi-mode state if there is insufficient mixing (or swapping) of modes:
\begin{corollary}
  If \(X\) is block-diagonal (i.e. there exists at least a bipartition of the set of modes such that no mixing occurs between the partitions), then a single-mode measurement is insufficient for GDQST of the entire multi-mode state.
\end{corollary}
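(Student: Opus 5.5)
The idea is to exhibit nonzero covariance and displacement perturbations that no measurement record can see. Suppose the modes split into two nonempty sets \(S_1, S_2\) with \(S_1\) containing the measured mode. Order the phase-space coordinates so that those of \(S_1\) come first. Block-diagonality then means \(X = X_1 \oplus X_2\) with respect to \(\mathds{R}^{2m} = V_1 \oplus V_2\), where \(V_k\) is spanned by the quadratures of the modes in \(S_k\). The measured vector \(\mathbf{b}\) is supported on the measured mode, so \(\mathbf{b} \in V_1\) (taking the fixed passive unitary to be absent or likewise block-diagonal). Every power \(X^i = X_1^i \oplus X_2^i\) preserves both subspaces, and so does its transpose \((X^T)^i\), which is the form that appears in the paper's evolution of \(\Gamma\) and \(\mathbf{d}\).

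For the first moments, I would take any nonzero \(\mathbf{d}' \in V_2\). Then \(\bra{b} X^i \ket{d'} = 0\) for all \(i\), since \(X^i\mathbf{d}' \in V_2 \perp \mathbf{b}\). So \(\mathbf{d}\) and \(\mathbf{d}+\mathbf{d}'\) yield identical mean time series for every \(i\). The map \(\beta\) is therefore not injective on any window, and the same holds for continuous-time semigroups if \(X_t\) is block-diagonal for all \(t\).

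For the second moments, the time series is \(\var(\mathbf{b}^T\mathbf{R})\) after \(i\) steps, which equals
\[
\mathbf{b}^T\big((X^T)^i \Gamma X^i + Y_i\big)\mathbf{b}, \qquad Y_i := \textstyle\sum_{j<i}(X^T)^jYX^j .
\]
Since \(Y_i\) does not depend on the state, only the term \(\mathbf{b}^T (X^T)^i \Gamma X^i \mathbf{b}\) matters. Set \(\mathbf{c}_i := X^i \mathbf{b}\), which stays in \(V_1\), so this term reads \(\mathbf{c}_i^T \Gamma \mathbf{c}_i\) and involves only the \(V_1\!\times\!V_1\) block \(\Gamma_{11}\). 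Hence any perturbation \(\Delta\Gamma\) supported on the off-diagonal blocks \(\Gamma_{12}, \Gamma_{21}\) or on \(\Gamma_{22}\) leaves every \(\alpha_i\) unchanged.

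It remains to check that such perturbations give different valid Gaussian states. Start from a state with \(\Gamma + i\Omega > 0\) strictly, for example a thermal state. Adding \(\epsilon\,\Delta\Gamma\) with \(\Delta\Gamma\) symmetric, nonzero, and supported on \(\Gamma_{22}\) preserves the uncertainty principle for small \(\epsilon\). This gives two distinct Gaussian states with identical statistics at all times, so the state is not identifiable. In the language of Theorem \ref{thm:tomography}, \(\alpha\) has nontrivial kernel, and that theorem's genericity conclusion simply excludes such \(X\).

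The main obstacle is making precise what \enquote{single-mode measurement} means relative to the bipartition. If a fixed passive unitary \(U\) were allowed before the measurement, \(\mathbf{b}\) could spread over both parts. I would handle this by requiring the partition to be defined relative to the effective measured direction, i.e. \(\mathbf{b}\) lies in one invariant summand. Equivalently, the measured mode belongs to one block and the measurement optics do not mix blocks, and I would state this assumption explicitly.
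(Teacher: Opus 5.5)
Your proof is correct, and it takes a more direct route than the paper's. The paper reuses the factorization of the displacement map from the proof of Theorem \ref{thm:tomography}, \(\mathbf{d}\mapsto AB\,P^{-1}\mathbf{d}\). With the eigenvector matrix \(P\) chosen block-diagonal, every eigenvector lying in the other block has zero first component. The diagonal factor \(B\), with entries \(\bra{1}P\ket{j}\), therefore has a zero, so \(\det(B)=0\). The paper treats only the displacement, since one unidentifiable moment already defeats tomography.

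You skip the spectral decomposition and work with the invariant subspaces \(V_1, V_2\) directly, writing down explicit invisible directions: \(\mathbf{d}'\in V_2\), and \(\Delta\Gamma\) vanishing on the \(V_1\times V_1\) block. Your \(\mathbf{d}'\) is exactly a combination of the eigenvectors whose \(B\)-entries vanish in the paper's argument. Your version gains several things:
\begin{enumerate}[(i)]
\item It needs no diagonalizability or spectral assumption on \(X\); the paper tacitly assumes an eigenbasis exists.
\item It covers all times and the semigroup case at once, without the extension results.
\item It also shows the covariance matrix is unidentifiable. Your interior-point check is the easy but necessary step that turns a kernel vector of \(\alpha\) into two distinct physical states. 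For the displacement this is automatic, since \(\mathbf{d}\) is unconstrained, so your first-moment paragraph alone already proves the corollary.
\end{enumerate}
The paper's version, in exchange, locates the failure precisely inside the determinant criterion of Theorem \ref{thm:tomography}.

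Your caveat about the passive unitary is well founded. The paper's proof likewise assumes that the effective measured direction lies in one block (it speaks of a measurement of the 1st mode). If \(\mathbf{b}\) had components in both blocks, the entries of \(B\) would generically all be nonzero and the argument would not go through. Stating that assumption explicitly, as you propose, is the right call.
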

\begin{proof}
  This follows by observing that the diagonalizing matrix \(O\) inherits the block structure so a measurement of, say, the 1st mode, leads to \(\det(B) = \prod_{j=1}^{2m} \bra{1}O\ket{j} = 0\) because \(O^T\ket{1}\) must be in the subspace defined by the image of the first block, which is necessarily orthogonal to some \(\ket{j}\) by the assumption that there are at least two blocks. Hence the displacement vector cannot be identified and GDQST is not possible.
\end{proof}

\subsection{Pure states}
\begin{proposition}
  The covariance matrix of an \(m\)-mode pure Gaussian state takes the form
  \begin{equation}\label{eq:purecov}
    \Gamma = \begin{pmatrix}
      A & AB \\
      BA & BAB + A^{-1}
    \end{pmatrix}
  \end{equation}
  with real symmetric \(m \times m\) matrices \(A\) and \(B\) and \(A > 0\).
\end{proposition}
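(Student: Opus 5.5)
The plan is to use the characterisation of pure Gaussian states through their covariance matrices, combined with the symplectic (Williamson) normal form. With the convention \(\Gamma_{kl} = \Tr[\rho\{R_k - d_k, R_l - d_l\}_+]\), the vacuum has \(\Gamma = I_{2m}\). A Gaussian state is pure if and only if all its symplectic eigenvalues equal one, that is, if and only if \(\Gamma = S S^T\) for some symplectic \(S\). Equivalently, \(\Gamma\) is pure if and only if \(\Gamma \Omega \Gamma \Omega^T = I\), i.e.\ \((\Gamma\Omega)^2 = -I\). I will take this as the starting point, since the Williamson decomposition gives \(\Gamma = S D S^T\) with \(D = \mathrm{diag}(\nu,\nu)\), and purity (\(\Tr\rho^2 = \det(\Gamma)^{-1/2} = 1\), together with \(\nu_i \geq 1\)) forces \(D = I\).

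Next I factor an arbitrary symplectic \(S\) into block form. Every symplectic matrix can be written as \(S = L\,K\,O\). Here \(O\) is orthogonal symplectic and drops out of \(SS^T\). The factor \(K = \mathrm{diag}(G, G^{-T})\) has \(G \in GL_m(\mathds{R})\). The factor \(L = \begin{pmatrix} I & 0 \\ B & I \end{pmatrix}\) is a lower shear with \(B\) symmetric, which is symplectic exactly because \(B = B^T\). I would justify this factorisation directly, rather than citing an Iwasawa-type decomposition, by writing \(\Gamma = SS^T = \begin{pmatrix} \Gamma_{11} & \Gamma_{12} \\ \Gamma_{21} & \Gamma_{22} \end{pmatrix}\) and doing a block Gaussian elimination. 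Since \(\Gamma > 0\), the block \(A := \Gamma_{11}\) is positive definite and hence invertible. Define \(B := A^{-1}\Gamma_{12}\). Then the congruence
\[
\begin{pmatrix} I & 0 \\ -B^T & I \end{pmatrix} \Gamma \begin{pmatrix} I & -B \\ 0 & I \end{pmatrix} = \begin{pmatrix} A & 0 \\ 0 & \Gamma_{22} - B^T A B \end{pmatrix}
\]
gives \(\Gamma_{12} = AB\), \(\Gamma_{21} = B^T A\) and \(\Gamma_{22} = B^T A B + \mathcal{S}\), where \(\mathcal{S}\) is the Schur complement.

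It remains to prove two things from the purity condition \(\Gamma\Omega\Gamma\Omega^T = I\): that \(B\) is symmetric and that \(\mathcal{S} = A^{-1}\).

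First, the conjugating matrix \(\begin{pmatrix} I & B \\ 0 & I \end{pmatrix}\) is symplectic if and only if \(B = B^T\), so I cannot use that to get symmetry. Instead I expand \((\Gamma\Omega)^2 = -I\) block by block. Take \(\Gamma\Omega = \begin{pmatrix} -AB & A \\ -\Gamma_{22} & B^T A \end{pmatrix}\). The (1,1) block of its square is \(ABAB - A\Gamma_{22} = -I\), and the (1,2) block is \(-ABA + AB^T A = 0\).

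The (1,2) identity gives \(B = B^T\) immediately, since \(A\) is invertible. Substituting into the (1,1) identity gives \(\Gamma_{22} = A^{-1} + BAB\), so \(\mathcal{S} = A^{-1}\). This yields exactly \eqref{eq:purecov}, with \(A > 0\) and \(A, B\) symmetric.

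I expect the main obstacle to be justifying the purity criterion \((\Gamma\Omega)^2 = -I\) in the paper's normalisation. I would handle it via Williamson's theorem: purity means \(D = I\), so \(\Gamma = SS^T\). Then \(\Gamma\Omega\Gamma\Omega^T = S S^T \Omega S S^T \Omega^T = S\Omega S^T\Omega^T = I\), using \(S^T\Omega S = \Omega\) and its equivalent form \(S\Omega S^T = \Omega\). Everything after that is block algebra.
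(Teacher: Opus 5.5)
Your proof is correct and follows essentially the same route as the paper. The paper likewise writes $\Gamma$ with off-diagonal block $C=\Gamma_{12}$, expands $(\Gamma\Omega)^2=-\mathds{1}$ blockwise, reads off $CA=(CA)^T$ and $AD=\mathds{1}+C^2$ (your $(1,2)$ and $(1,1)$ block identities), and sets $B=A^{-1}C$. You additionally justify the purity criterion via Williamson's theorem and $A>0$ via $\Gamma=SS^T>0$, which the paper takes as well known; your $S=LKO$ factorisation and Schur-complement congruence are harmless but not needed.
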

\begin{proof}
  It is well known that if the covariance matrix is written in block form as \( \Gamma = \begin{pmatrix}
  A & C\\
  C^T & D,
\end{pmatrix}\)
with \(A, D \in \mathbb{S}_m\) and \(C \in \mathbb{M}_{m}(\mathds{R})\) then the symplectic condition \((\Gamma \Omega)^2 = -\mathds{1}\) yields \(AD = \mathds{1} + C^2,
  DC = (DC)^T,
  CA = (CA)^T\). The result follows from relabelling \(B = A^{-1}C\).
\end{proof}

\begin{theorem}\label{thm:puretomo}
  For a pure Gaussian state \((\Gamma, d), (\Gamma\Omega)^2 = -\mathds{1}\), the map 
  \begin{equation}\label{eq:alphapure}
    \Gamma \mapsto (\bra{a} X^i \otimes_s X^i \ket{\gamma})_{i = 0}^{m(m+1) - 1}\\
  \end{equation}
  is injective for any \(X\) other than a null set depending on the measurement \(\mathbf{b}^T\mathbf{R}\). Hence GDQST with a fixed single-mode homodyne measurement performed at \(m(m+1)\) consecutive time steps is generically sufficient to identify any pure multi-mode Gaussian state.
\end{theorem}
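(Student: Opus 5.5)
The plan is to reduce injectivity on the pure-state manifold to the statement that a generic linear functional family separates points of that manifold. By the preceding Proposition, the pure-state covariance matrices form the image of the smooth, injective parameterisation
\[
\Phi:(A,B)\mapsto \begin{pmatrix} A & AB\\ BA & BAB+A^{-1}\end{pmatrix},
\]
with \(A>0\) and \(A,B\in\mathbb{S}_m(\mathds{R})\). This set \(\mathcal{P}\) is a semi-algebraic set of dimension \(m(m+1)\). The map in \eqref{eq:alphapure} is the restriction to \(\mathcal{P}\) of the linear map \(L_X:\ket{\gamma}\mapsto (\bra{a}(X\otimes_s X)^i\ket{\gamma})_{i=0}^{m(m+1)-1}\).

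As in the proof of Theorem \ref{thm:tomography}, I would first discard the null set of defective or singular \(X\). I then write \(L_X = M' N P^{-1}\otimes_s P^{-1}\), where \(M'\) is the \(m(m+1)\times m(2m+1)\) truncation of the Vandermonde matrix in the products \(\lambda_\mu\lambda_\nu\). Injectivity on \(\mathcal{P}\) is equivalent to \(\ker L_X\cap(\mathcal{P}-\mathcal{P})=\{0\}\). Since \(\ker L_X\) has dimension \(m^2\), it is complementary in dimension to the \(m(m+1)\)-dimensional \(\mathcal{P}\).

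The second step exploits the algebraic structure of \(\mathcal{P}\) rather than a naive secant count, since the secant variety generally has dimension \(2m(m+1)\), which is too large. Two pure states \(\Gamma_1,\Gamma_2\) both satisfy \(\Gamma\Omega\Gamma=\Omega^T\), i.e.\ \((\Gamma\Omega)^2=-\mathds{1}\). I would use this to show that the data determine the \(A\)-block and the \(C=AB\) block, after which \(D=A^{-1}(\mathds{1}+C^2)\) is forced. Concretely, I would exhibit an explicit \(X\) and a fixed \(\mathbf{b}\) for which the \(m(m+1)\) functionals \(\bra{a}(X\otimes_s X)^i\) are linearly independent modulo the annihilator of the coordinates \((A,\,\text{sym part of } C)\) on \(\mathcal{P}\). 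A natural candidate is a mode-mixing \(X\) whose eigenvalues are powers of distinct primes, as in the proof of Theorem \ref{thm:tomography}. For that \(X\), the restricted map would recover \(A\) and \(AB\), and hence \(\Gamma\), injectively.

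The third step is to pass from this example to genericity. I would consider the semi-algebraic set of tuples \((X,\Gamma_1,\Gamma_2)\) with \(\Gamma_1\neq\Gamma_2\in\mathcal{P}\) and \(L_X(\Gamma_1-\Gamma_2)=0\), and project it to \(X\)-space. By Tarski–Seidenberg the set of bad \(X\) is semi-algebraic, so it is a null set as soon as it has empty interior. I would argue the latter via a fibre-dimension count: for fixed \(\Gamma_1\neq\Gamma_2\), the condition \(L_X(\Gamma_1-\Gamma_2)=0\) imposes \(m(m+1)\) independent polynomial conditions on \(X\). I would establish this independence using the Vandermonde/diagonal factorisation and the product-of-primes witness.

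The expected main obstacle is this last dimension count. The family of differences \(\Gamma_1-\Gamma_2\) has dimension \(2m(m+1)\), which exceeds the \(m(m+1)\) conditions. A pure count therefore fails, and the argument must genuinely use the purity constraint to collapse the effective family of differences. I would first try to show that, on \(\mathcal{P}\), the measured functionals factor through the linear coordinates \((A,C_{\mathrm{sym}})\) up to a polynomial correction determined by those coordinates. That would make the relevant parameter space linear of dimension \(m(m+1)\), so that Theorem \ref{thm:tomography}'s Vandermonde argument could be applied directly.
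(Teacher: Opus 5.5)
\textbf{There is a genuine gap, and it cannot be closed.} It sits exactly at the step you flag as the main obstacle, and the statement as written fails already for $m=1$.

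On $\mathcal{P}$ each datum is $\bra{a}X^i\otimes_s X^i\ket{\gamma}=\mathbf{c}_i^T\Gamma\mathbf{c}_i$, where $\mathbf{c}_i=(X^T)^i\mathbf{b}=(\mathbf{c}_{i,q},\mathbf{c}_{i,p})$. In the parameterisation \eqref{eq:purecov} this equals $(\mathbf{c}_{i,q}+B\mathbf{c}_{i,p})^TA(\mathbf{c}_{i,q}+B\mathbf{c}_{i,p})+\mathbf{c}_{i,p}^TA^{-1}\mathbf{c}_{i,p}$. The $D$-block enters through $\mathbf{c}_{i,p}\mathbf{c}_{i,p}^T$, and this cannot vanish for all $i$, since otherwise only $A$ is measured. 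Hence the data are never affine in $(A,C_{\mathrm{sym}})$. Your ``polynomial correction'' is in fact a rational function of those coordinates, so you are left with a square nonlinear system of $m(m+1)$ equations in $m(m+1)$ unknowns. Linear independence of the $(A,C)$-columns of $L_X$ is all the Vandermonde/diagonal factorisation can deliver, and it says nothing about uniqueness of solutions of such a system in $\mathcal{P}$. For the same reason, the fibre count in your third step stays too large.

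Here is the $m=1$ counterexample. The data are $\mathbf{b}^T\Gamma\mathbf{b}$ and $\mathbf{c}^T\Gamma\mathbf{c}$ with $\mathbf{c}=X^T\mathbf{b}$. Purity means $\Gamma>0$ and $\det\Gamma=1$, because $(\Gamma\Omega)^2=-\det(\Gamma)\mathds{1}$ for $2\times 2$ matrices.
If $\mathbf{c}\parallel\mathbf{b}$, the second datum is redundant. Otherwise, let $R$ be the linear map with $R\mathbf{b}=\mathbf{b}$ and $R\mathbf{c}=-\mathbf{c}$, and set $\Gamma'=R^T\Gamma R$. Then $\Gamma'$ is positive definite with $\det\Gamma'=1$, so it is pure. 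It reproduces both data, and $\mathbf{b}^T\Gamma'\mathbf{c}=-\mathbf{b}^T\Gamma\mathbf{c}$, so $\Gamma'\neq\Gamma$ whenever $\mathbf{b}^T\Gamma\mathbf{c}\neq 0$. For $\mathbf{b}=e_1$, $\Gamma$ and $\Gamma'$ are just the two roots in $B$ of $A(c_q+Bc_p)^2+c_p^2/A=\text{const}$.

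So for $m=1$ the bad set of $X$ is all of $\mathbb{M}_2(\mathds{R})$. No Tarski--Seidenberg or dimension argument can make it null.

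The paper's own proof does not supply the missing idea either. It applies the Vandermonde argument to a sum restricted to $\mathcal{I}_{\mathrm{pure}}$, which is not the measured map. The measured map runs over all of $\mathcal{I}_{2m}$, and it acts on the eigenbasis coordinates $\ket{\gamma'}$ rather than on the block coordinates of $\Gamma$. In other words, the paper performs precisely the linearisation you were trying to justify. Any reduction to $m(m+1)$ time steps for pure states would need a genuinely nonlinear argument, and for $m=1$ no such argument can exist.
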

\begin{proof}
  Expressing \(\Gamma\) as in eq. \eqref{eq:purecov} and counting parameters, gives \(\frac{m(m+1)}{2}\) for each of \(A\) and \(B\) for a total dimension of \(m(m+1)\) for pure Gaussian states (compared to \(m(2m+1)\) for unconstrained Gaussian states).
\(\ket{\gamma}\) can be expressed in the same basis of the symmetric subspace introduced earlier as
\begin{equation}\label{eq:gammaexpanded}
  \begin{split}
  \ket{\gamma} = &\sum_{\substack{\mu ,\nu =1\\\mu  \leq \nu }}^{m}(\sqrt{2}(1 - \delta_{\mu\nu}) + \delta_{\mu \nu })A_{\mu \nu }\ket{\psi_{\mu,\nu}}\\
            &+ \sum_{\mu ,\nu =1}^m \sqrt{2} (AB)_{\mu \nu }\ket{\psi_{\mu ,\nu +m}}\\
  &+ \sum_{\substack{\mu ,\nu =1\\ \mu \leq \nu }}(\sqrt{2}(1 - \delta_{\mu\nu}) + \delta_{\mu \nu })(BAB + A^{-1})_{\mu \nu }\ket{\psi_{\mu +m,\nu +m}}.
  \end{split}
\end{equation}
  Noting that \(B = A^{-1}(AB)\) and that \(AB\) is symmetric by the symplectic condition, the parameterization \((A,B) \mapsto \begin{pmatrix} A & AB\\ BA & BAB + A^{-1} \end{pmatrix}\) yields an injective map \(f:  \mathbb{S}_{m(m+1)/2}\times  \mathbb{S}_{m(m+1)/2} \rightarrow \mathds{R}^{m(2m+1)} \) such that \((A, AB) \mapsto \ket{\gamma}\). Hence the coefficients of \(\mathcal{I}_{\text{pure}} := \{\ket{\psi_{\mu,\nu}} |  1 \leq \mu \leq \nu \leq m\} \cup \{\ket{\psi_{\mu,\nu +m}} | 1 \leq \mu \leq \nu \leq m\}\) in \eqref{eq:gammaexpanded}
are sufficient to reconstruct \(\ket{\gamma} = f(A, AB)\). The rest of the proof follows that of Theorem \ref{thm:tomography}, with the caveat that the time series is not of sufficient length to apply Theorem \ref{thm:extension} so \(t_0 = 0\) by necessity. Define a map \begin{equation}
  \Gamma \mapsto \sum_{i=0}^{m(m+1)-1} \sum_{\{\mu,\nu\} \in \mathcal{I}_{\text{pure}}} (\lambda_{\mu} \cdot \lambda_{\nu})^i \braket{a' }{\psi_{\mu,\nu}} \braket{\psi_{\mu,\nu}}{\gamma' }\ket{i}
\end{equation}
which corresponds to the product of a vandermonde matrix and a diagonal matrix as before, but now acting on \(\text{span}(\mathcal{I}_{\text{pure}})\). An identical argument convinces of their invertibility so \(\bra{\psi_{\mu,\nu}}\ket{\gamma}\) can be determined for \(\forall \{\mu, \nu\} \in \mathcal{I}_{\text{pure}}\) and the map \eqref{eq:alphapure} is injective. Since \(m(m+1) \geq 2m, \:\forall m \geq 1\), the case of the displacement vector is already covered by Theorem \ref{thm:tomography}.

\end{proof}

\noindent \emph{Acknowledgements}\\
The author acknowledges funding from the MCQST and the Munich Quantum Valley, which is supported by the Bavarian state government through the Hightech Agenda Bayern Plus.

\bibliographystyle{halpha}
\bibliography{refs}
\end{document}